\pgfplotsset{compat=1.18}
\newtheorem{theorem}{Theorem}
\newtheorem{proposition}[theorem]{Proposition}
\newtheorem{remark}[theorem]{Remark}
\newtheorem{assumption}[theorem]{Assumption}
\newtheorem{procedure}[theorem]{Procedure}
\begin{document}



\title{
    Koopman-based feedback design\\
    with stability guarantees
}
\author{Robin Strässer,
    Manuel Schaller,
    Karl Worthmann,
    Julian Berberich,
    Frank Allgöwer
\thanks{F.\ Allgöwer is thankful that this work was funded by the Deutsche Forschungsgemeinschaft (DFG, German Research Foundation) under Germany's Excellence Strategy -- EXC 2075 -- 390740016 and within grant AL 316/15-1 -- 468094890.
K.\ Worthmann gratefully acknowledges funding by the German Research Foundation (DFG; project number 507037103).
R.\ Strässer thanks the Graduate Academy of the SC SimTech for its support.}%
\thanks{R.\ Strässer, J.\ Berberich, and F.\ Allgöwer are with the Institute for Systems Theory and Automatic Control, University of Stuttgart, 70550 Stuttgart, Germany (e-mail: \{robin.straesser, julian.berberich, frank.allgower\}@ist.uni-stuttgart.de)}%
\thanks{M.\ Schaller and K.\ Worthmann are with the Institute for Mathematics, Technische Universität Ilmenau, 99693 Ilmenau, Germany (e-mail: \{manuel.schaller, karl.worthmann\}@tu-ilmenau.de).}}

\maketitle
\begin{abstract}
    We present a method to design a state-feedback controller ensuring exponential stability for nonlinear systems using only measurement data. 
    Our approach relies on Koopman-operator theory and uses robust control to explicitly account for approximation errors due to finitely many data samples.
    To simplify practical usage across various applications, we provide a tutorial-style exposition of the feedback design and its stability guarantees for single-input systems.
    Moreover, we extend this controller design to multi-input systems and more flexible nonlinear state-feedback controllers using gain-scheduling techniques to increase the guaranteed region of attraction.
    As the proposed controller design is framed as a semidefinite program, it allows for an efficient solution.
    Further, we enhance the geometry of the region of attraction through a heuristic algorithm that establishes a connection between the employed Koopman lifting and the dynamics of the system.
    Finally, we validate the proposed feedback design procedure by means of numerical examples.
\end{abstract}

\begin{IEEEkeywords}
    Data-driven control, Koopman, nonlinear systems, robust control, stability, region of attraction
\end{IEEEkeywords}
\section{Introduction}
\label{sec:introduction}

\IEEEPARstart{C}{ontrol} theory has recently encountered a shift towards data-driven methods by the increasing availability of measurement data. 
This trend is particularly prominent in the context of nonlinear systems, where classical approaches, such as system identification and first-principles modeling, prove time-consuming, require expert knowledge, and often lead to non-convex optimization problems~\cite{martin:schon:allgower:2023b}.
Data-driven methods may circumvent these challenges by exploiting available data to construct a suitable representation of the underlying complex system that allows for a convex controller design.

While data-driven feedback design for linear systems with guarantees for the closed-loop system has been extensively investigated in the literature~\cite{willems:rapisarda:markovsky:demoor:2005,depersis:tesi:2019, berberich:koch:scherer:allgower:2020,waarde:eising:trentelman:camlibel:2020}, obtaining rigorous guarantees for general nonlinear systems remains an open challenge, see also, e.g.,~\cite{hou:wang:2013,markovsky:dorfler:2021,faulwasser:ou:pan:schmitz:worthmann:2023,martin:schon:allgower:2023b} and the references therein.
Available stability guarantees for particular classes of nonlinear systems include, e.g., Hammerstein and Wiener systems~\cite{berberich:allgower:2020}, bilinear systems~\cite{bisoffi:depersis:tesi:2020a}, polynomial systems~\cite{guo:depersis:tesi:2022a,dai:sznaier:2021}, and rational dynamics~\cite{strasser:berberich:allgower:2021}, where the resulting controller designs are based on semidefinite programs (SDP).
For more general classes of nonlinear systems, an SDP-based controller design can be achieved through approaches such as polynomial approximation~\cite{martin:allgower:2023,martin:schon:allgower:2023a}, Kernel regression~\cite{fiedler:scherer:trimpe:2021,devonport:yin:arcak:2020}, or linear parameter-varying (LPV) embeddings~\cite{verhoek:koelewijn:haesaert:toth:2023,miller:sznaier:2023}.
A comprehensive overview of data-driven control methods for nonlinear systems is provided in~\cite{martin:schon:allgower:2023b}. 

B.O.\ Koopman introduced a different perspective in his seminal paper of 1931~\cite{koopman:1931}, offering an exact description of autonomous nonlinear dynamics through a linear, yet infinite-dimensional system.
This approach views the system through the lens of observables evolving linearly over time, bridging the gap between nonlinear and linear control methods. 
Over the past two decades, the Koopman operator has gained widespread popularity~\cite{mezic:2005,proctor:brunton:kutz:2018,otto:rowley:2021,mauroy:mezic:susuki:2020,bevanda:sosnowski:hirche:2021}.
For control-affine systems, the Koopman framework yields an infinite-dimensional bilinear operator, and a finite-dimensional matrix approximation is typically used for computational tractability. 
Common approximation techniques include extended dynamic mode decomposition (EDMD)~\cite{williams:kevrekidis:rowley:2015,korda:mezic:2018b,haseli:cortes:2023a} and machine learning methods~\cite{takeishi:kawahara:yairi:2017,otto:rowley:2019} (and the references therein).
A probabilistic analysis of the resulting approximation error is given in~\cite{chen:vaidya:2019,nuske:peitz:philipp:schaller:worthmann:2023,schaller:worthmann:philipp:peitz:nuske:2023} and was recently extended to kernel EDMD~\cite{philipp:schaller:worthmann:peitz:nuske:2024,kohne:philipp:schaller:schiela:worthmann:2024}.
Although the majority of the existing approaches predominantly assume linear lifted dynamics or neglect the deviation between the infinite-dimensional lifted system and its finite-dimensional approximation~\cite{brunton:brunton:proctor:kutz:2016,korda:mezic:2018a,sinha:nandanoori:drgona:vrabie:2022}, Koopman-based control has demonstrated success in a wide range of applications, cf.~\cite{bruder:fu:gillespie:remy:vasudevan:2021,kim:quan:chung:2023,budisic:mohr:mezic:2012} and \cite[Chapter III]{mauroy:mezic:susuki:2020}. However, closed-loop guarantees can only be deduced by incorporating the approximation error into the controller design and its analysis. 
The work in~\cite{strasser:berberich:allgower:2023a} introduces a robust linear controller design for an error-perturbed bilinear lifted system in discrete time based on a linear fractional representation (LFR). Under the assumption of a given finite-gain bound on the approximation error resulting from a finite dictionary and a finite data set, this approach provides closed-loop guarantees for the underlying nonlinear system. 
However, the practical application is challenging, since the work does not provide a procedure on how to ensure the required finite-gain bound.

In this work, we introduce a robust controller design that rigorously ensures end-to-end stability for nonlinear systems using measurement data only. 
Based on previous work on the approximation error~\cite{nuske:peitz:philipp:schaller:worthmann:2023,bold:grune:schaller:worthmann:2023}, we establish a novel error bound for the Koopman generator, which is proportional to the state and control in dependence on the amount of data. 
This allows us to extend the controller design from our previous works~\cite{strasser:berberich:allgower:2023a,strasser:berberich:allgower:2023b} to continuous time and solve the key technical challenge of stabilizing the original nonlinear system rather than some lifted bilinear one.
Notably, this is possible without having to switch between different coordinates but directly in terms of the \emph{true} system state.
Our results rely on an SDP in terms of linear matrix inequalities (LMI), which are efficiently solvable by standard software.
The presented controller design establishes a connection between a probabilistic tolerance concerning closed-loop guarantees for the nonlinear system and the necessary amount of data needed to satisfy a predetermined bound on the approximation error of the lifted system.
To enhance accessibility and implementation in a variety of applications, we provide a tutorial-style exposition of the proposed feedback design and its provable guarantees for a simplified setting with single-input systems.
Moreover, we generalize the controller design to more flexible nonlinear state-feedback controllers, reducing conservatism and resulting in guarantees within a larger region of attraction (RoA). 
We evaluate the presented controller by means of various numerical examples, e.g., by illustrating the RoA in the original state space.
The recent work~\cite{eyuboglu:powell:karimi:2024} is related to our approach since it proposes a robust approach in discrete time using the Koopman operator and an LPV formulation, but the derived stability guarantees hold only if the LPV system is already open-loop stable.

This work is structured as follows.
After providing the preliminaries for our approach in Section~\ref{sec:preliminaries}, we derive a data-driven controller design for nonlinear systems using convex optimization. 
Section~\ref{sec:controller-design-simple} provides the derivation of the proposed state-feedback controller design for single-input systems alongside a proof of closed-loop stability.
We consider multi-input systems and a more general controller parametrization in Section~\ref{sec:controller-design-general}.
The geometry of the guaranteed RoA induced by the nonlinear lifting and the controller design is discussed in Section~\ref{sec:geometry-of-RoA}.
Finally, we apply the developed controller design to numerical examples in Section~\ref{sec:numerical-examples}, before Section~\ref{sec:conclusion} concludes the paper.

\emph{Notation:~}
We write $I_p$ for the $p\times p$ identity matrix and $0_{p\times q}$ for the $p\times q$ zero matrix, where we omit the index if the dimension is clear from the context. 
If $A$ is symmetric, then we write $A\succ 0$ or $A\succeq 0$ if $A$ is positive definite or positive semidefinite, respectively. 
Negative (semi)definiteness is defined analogously. 
Matrix blocks which can be inferred from symmetry are denoted by $\star$ and we abbreviate $B^\top A B$ by writing $[\star]^\top AB$. 
We denote the Kronecker product by~$\kron$.
We write $[n:m]$ for the interval of integers $[n,m] \cap \mathbb{Z}$ and use the short-hand notation $[m] := [1:m]$.
A function $f:\bbR^p\to\bbR^q$ belongs to the differentiability class $\cC^k(\bbF,\bbH)$ if the function is $k$ times continuously differentiable on the domain $\bbF\subseteq\bbR^p$ mapping to $\bbH\subseteq\bbR^q$. 
Finally, we define the function class $\cK$ containing all functions $\alpha: [0,\infty) \to [0,\infty)$ which are continuous, strictly increasing, and satisfy $\alpha(0)=0$.
\section{Preliminaries}\label{sec:preliminaries}

In this section, we first introduce the 
problem setting in~Section~\ref{sec:problem-setting} and, then, derive a data-driven representation of nonlinear systems using the Koopman operator in~Section~\ref{sec:DD-system-representation}.

\subsection{Problem setting}
\label{sec:problem-setting}

In this paper, we consider \emph{unknown} nonlinear control-affine systems of the form 
\begin{equation}\label{eq:dynamics-nonlinear}
    \dot{x}(t) = f(x(t)) + \sum_{i=1}^mg_i(x(t)) u_i(t),
\end{equation}
where $x(t) \in\bbR^n$ denotes the state at time $t \geq 0$ and the control function $u \in L_\mathrm{loc}^\infty([0,\infty),\bbR^m)$ serves as an input. The map $f:\bbR^n\to\bbR^n$ is called drift, while $g_i:\bbR^n\to\bbR^n$, $i \in [n]$, are called input maps. 
For initial condition $x(0) = \hat{x}\in\bbR^n$ and control function $u$, we denote the solution of~\eqref{eq:dynamics-nonlinear} at time~$t$ by $x(t;\hat{x},u)$ tacitly assuming its existence and uniqueness.
We assume that $f(0)=0$ holds, i.e., that the origin is a controlled equilibrium for $u=0$.

The functions $f$ and $g_i$, $i \in [m]$, are unknown and only data samples $f(x_j) + \sum_{i=1}^m g_i(x_j)u_i$ for suitably chosen inputs~$u_i$ and sampling points $x_j \in \mathbb{X} \subset \mathbb{R}^n$, $j \in [d]$ with $d \in \bbN$, are available. Here, $\mathbb{X}$ is assumed to be compact and convex.
Further, we assume that the control values are restricted to a compact set $\mathbb{U}\subset \bbR^m$ with $0\in \operatorname{int}(\mathbb{U})$.
Our goal is the systematic design of a static state-feedback control law $\mu: \mathbb{R}^n \rightarrow \mathbb{R}^m$ such that the origin is exponentially stable w.r.t.\ the closed loop, i.e.,  
\begin{align}\label{eq:dynamics-nonlinear-feedback}
    \dot{x}(t) = f(x(t)) + \sum_{i=1}^m g_i(x(t)) \mu_i(x(t)).
\end{align}
In particular, we want to guarantee exponential stability in a later defined RoA in terms of a Lyapunov-function sublevel set.

The goal of this work is to construct a data-driven controller relying solely on the aforementioned samples of the system. To this end, we represent the nonlinear system~\eqref{eq:dynamics-nonlinear} using the Koopman-operator framework.
\subsection{Data-driven system representation using the Koopman operator}\label{sec:DD-system-representation}

In the following, we first introduce the Koopman-operator framework and its finite-dimensional approximation using data samples. Then, we discuss suitable error bounds on the resulting approximation error.

\subsubsection{Koopman operator and its approximation}

The Koopman operator $\cK_t^u$ corresponding to the system dynamics~\eqref{eq:dynamics-nonlinear} with constant control input $u(t) \equiv\ u \in \bbR^m$ is defined as 
\begin{equation}
    (\cK_t^u \varphi)(\hat{x}) = \varphi(x(t;\hat{x},u))
\end{equation}
for all $t\geq 0$, $\hat{x}\in\bbX$, $\varphi \in L^2(\bbX,\bbR)$, where the real-valued functions $\varphi$ are called \emph{observables}~\cite{koopman:1931}.
Here, we tacitly assumed invariance of $\bbX$ under the flow such that the observable functions are defined on $x(t;\hat{x},u)$ for all $\hat{x}\in \bbX$. This assumption can be relaxed by considering initial values contained in a tightened version of $\bbX$, see, e.g., \cite{goor:mahony:schaller:worthmann:2023} for details and is not needed in the following, but simplifies our brief recap of known results.
For this setting, $(\cK_t^u)_{t\geq 0}$ is a strongly-continuous semigroup of bounded linear operators. We define the respective infinitesimal generator~$\cL^u$ via 
\begin{equation}
    \cL^u \varphi \coloneqq \lim_{t\searrow 0} \frac{\cK_t^u\varphi - \varphi}{t}
    \quad 
    \forall\, \varphi \in D(\cL^u),
\end{equation}
where the domain $D(\cL^u)$ consists of all $L^2$-functions for which the above limit exists. By definition of this generator, the propagated observable satisfies $\dot{\varphi}(x)=\cL^u \varphi(x)$.
A key structural property of this generator is that it preserves control affinity~\cite{williams:hemati:dawson:kevrekidis:rowley:2016,surana:2016}, i.e., for $u\in\bbR^m$ we obtain
\begin{equation}\label{eq:Koopman-generator}
    \cL^u = \cL^0 + \sum_{i=1}^m u_i (\cL^{e_i}-\cL^0),
\end{equation}
where $\cL^0$ and $\cL^{e_i}$, $i\in[m]$, are the generators of the Koopman semigroups corresponding to the constant control functions $u(t) \equiv 0$ and $u(t) \equiv e_i$ with the unit vector $e_i$, $i\in[m]$, respectively. 

A data-driven approximation of the Koopman operator or its generator can be constructed via EDMD~\cite{williams:kevrekidis:rowley:2015} or generator EDMD~\cite{klus:nuske:peitz:niemann:clementi:schutte:2020}. Therein, one learns a finite-dimensional version of the operators restricted onto the span of a finite number of observable functions, so-called compressions, by means of a finite number of samples. 
In EDMD, the approximation error usually consists of two sources: A deterministic projection error due to finitely many observable functions and a probabilistic estimation error due to finitely many data points, cf.~\cite{korda:mezic:2018b} for the infinite-data limit and \cite{nuske:peitz:philipp:schaller:worthmann:2023,schaller:worthmann:philipp:peitz:nuske:2023} for the respective extensions to control systems and finite-data error bounds. 
An alternative route avoiding the a-priori choice of a dictionary is given by kernel EDMD, cf.~\cite{klus:nuske:hamzi:2020,williams:rowley:kevrekidis:2016}, see \cite{philipp:schaller:worthmann:peitz:nuske:2024} and, in particular,~\cite{kohne:philipp:schaller:schiela:worthmann:2024} for uniform finite-data error bounds.

We define the dictionary $\bbV\coloneqq \spn{\phi_k}_{k=0}^{N}$ representing the $(N+1)$-dimensional subspace spanned by the chosen observables $\phi_k:\bbR^n\to\bbR$, $k\in[0:N]$.
We include a constant function and the full state in the observables, i.e., we define $\Phi: \mathbb{R}^n \to\bbR^{N+1}$ by $\phi_{0}(x)\equiv 1$ and $\phi_k(x) = x_k$ for $k\in [n]$ resulting in
\begin{equation}\label{eq:lifting-function}
    \Phi(x) = \begin{bmatrix}
        1 & x^\top & \phi_{n+1}(x) & \cdots & \phi_{N}(x)
    \end{bmatrix}^\top,
\end{equation}
where $\phi_k\in\cC^2(\bbR^n,\bbR)$ satisfy $\phi_k(0)=0$, $k\in[n+1:N]$.
Note that the constant observable $\phi_{0}\equiv 1$ is included to account for state-independent input maps $g_i(x)=b_i$, $b_i\in\bbR^n$, $i\in [m]$.
\begin{remark}
    The consideration of dictionaries not containing the coordinate maps is an important issue if full state measurements are not available.
    In this case, when performing Koopman-based prediction and control, one has to either resort to reprojection techniques~\cite{goor:mahony:schaller:worthmann:2023} to obtain a lifted state which is consistent with the structure of the dictionary, i.e., lies on $\Phi(\mathbb{X}) \coloneqq \{ \Phi(x) \,|\, x \in \mathbb{X} \}$, or use delay embeddings~\cite{otto2024learning}.
\end{remark}

As the lifting~$\Phi$ defined in~\eqref{eq:lifting-function} explicitly contains the full state $x$ and satisfies $\phi_k(0) = 0$ for all $k \in [N]$, we get the lower bound
\begin{equation}\label{eq:philower}
    \|\Phi(x)-\Phi(0)\|^2 = \|x\|^2 + \sum_{k=n+1}^N\|\phi_k(x)\|^2 \geq \|x\|^2.
\end{equation}
Further, a corresponding upper bound can be deduced via local Lipschitz continuity of the lifting function $\Phi(x)$, which is a direct consequence of $\Phi \in \cC^1(\bbR^n,\bbR^N)$.
In particular, there exists an $L_{\Phi}>0$ such that
\begin{equation}\label{eq:phiupper}
    \|\Phi(x) - \Phi(0)\|\leq L_\Phi \|x\|
\end{equation}
holds for all $x\in\bbX$.

Throughout the paper, we assume that the dictionary $\bbV$ is invariant w.r.t.\ the dynamics in the following sense.
\begin{assumption}[Invariance of the dictionary $\bbV$]\label{ass:invariance-of-dictionary}
    For any $\phi\in\bbV$, the relation $\phi(x(t;\hat{x},u))\in\bbV$ holds for all $u(t)\equiv\hat{u}\in\bbU$, $\hat{x}\in\bbX$, and $t\geq 0$.
\end{assumption}

We briefly discuss this assumption, cf.~also the paragraph after the proof of Proposition~\ref{lm:dynamics-lifted}.
We denote the $L^2$-orthogonal projection onto~$\bbV$ 
by $P_\bbV$. Then, Assumption~\ref{ass:invariance-of-dictionary} can be formulated compactly as 
\begin{equation}\label{eq:generator_matrix_coincides}
    P_\bbV\left.\cL^u\right|_\bbV = \left.\cL^u\right|_\bbV,
\end{equation}
i.e., the projection of $\left.\cL^u\right|_\bbV$ (the Koopman generator restricted to $\bbV$) onto $\bbV$ is equal to $\left.\cL^u\right|_\bbV$.
As a consequence, we obtain the Koopman operator for $u\in\bbU$ via $\cK_{t}^u|_\bbV = e^{t \cL^u|_\bbV}$.
Assumption~\ref{ass:invariance-of-dictionary} is commonly assumed when representing controlled systems via the Koopman operator.
Sufficient conditions for (approximately) satisfying this assumption are provided, e.g., by~\cite{goswami:paley:2021,brunton:brunton:proctor:kutz:2016,korda:mezic:2020}.
If the dictionary $\bbV$ does not fulfill the assumption, it is also possible to derive error bounds for the projection error~\cite{schaller:worthmann:philipp:peitz:nuske:2023}.
We note that~\cite{iacob:toth:schoukens:2024} proposes to consider also a nonlinear dependence of the lifted dynamics on the input signal, whereas we only require state-dependent observables for our proposed controller. Further,~\cite{haseli:cortes:2023b} introduces a notion of proximal invariance using Jordan principal angles, which provides a promising route for formulating approximate invariance in future work.
Follow-up work should be devoted to exploring structured approaches of inferring the observables for dynamical control systems, e.g., via kernels~\cite{philipp:schaller:worthmann:peitz:nuske:2023b}.

We assume that the nonlinear dynamics~\eqref{eq:dynamics-nonlinear} are unknown and we have state-derivative data $\{ x_j^{\bar{u}},\dot{x}_j^{\bar{u}}\}_{j=1}^{d^{\bar{u}}}$ for the constant control inputs $u(t) \equiv \bar{u}$, where $\bar{u}\in\{0,e_1,...,e_m\}$ and $d^{\bar{u}}$ denotes the number of data points for the control vector~$\bar{u}$.
Since the lifting functions are known, this allows constructing data samples $\lbrace \phi_k(x_j^{\bar{u}}), \langle \nabla \phi_k(x_j^{\bar{u}}), \dot{x}_j^{\bar{u}} \rangle \rbrace$, $k \in [0:N]$ and $j\in [d^{\bar{u}}]$.
Throughout this paper, we assume that derivative measurements are available, which is a standard requirement in continuous-time data-driven control.
It is also possible to derive analogous results for both continuous-time and discrete-time systems without resorting to derivative measurements, see~\cite{strasser:schaller:worthmann:berberich:allgower:2024} for the required modifications. 
This is particularly interesting for future research on noisy data.

We emphasize that we require sufficiently many snapshots for each of the $m+1$ open-loop dynamics corresponding to a constant control~$\bar{u}\in\{0,e_1,...,e_m\}$.
For the data generation, the control inputs $e_i$, $i\in[m]$, may be chosen as any basis of $\bbR^m$, e.g., since $0\in \operatorname{int}(\bbU)$, as scaled unit vectors to ensure their feasibility.

We briefly recall the consequences of the structure of the Koopman generator and the chosen dictionary and show how to design a suitable data-driven surrogate model as proposed in~\cite[Sec.~3.1]{bold:grune:schaller:worthmann:2023}.
The Koopman generator $\cL^u$ corresponding to the chosen dictionary defined in~\eqref{eq:lifting-function} has a particular structure due to the presence of the constant observable $\phi_0(x)\equiv 1$ satisfying
\begin{equation}\nonumber
    \tfrac{\mathrm{d}}{\mathrm{d}t}\phi_0(x(t;\cdot,u)) = \langle \phi_0(x(t;\cdot,u)), f(x(t;\cdot,u),u(t)) \rangle \equiv 0.    
\end{equation}
As $\ddt{}\phi_0(x(t;\cdot,u))$ corresponds to the first row of $\cL^u$, we may infer that the first row of $\cL^u$ can only have zero entries for all $u\in\bbU$.
Hence, one may deduce that $\cL^0$ and $\cL^{e_i}$, $i\in[m]$, contain only zeros in their first row.
Moreover, due to the assumption that the drift term of the dynamics~\eqref{eq:dynamics-nonlinear} vanishes at the origin, i.e., $f(0)=0$, we compute
\begin{equation}
    (\cL^0 \Phi)(0) =  \nabla \Phi(0)^\top f(0) = 0,
\end{equation}
i.e., in $\cL^0$ there are no contributions from the constant observable $\phi_0(x)\equiv 1$.
Hence, the first column of $\cL^0$ must be zero.
Thus, the generators are of the form
\begin{equation}\label{eq:Koopman-generator-structure}
        \cL^0=\begin{bmatrix} 
            0 & 0_{1\times N} \\
            0_{N\times 1} & \cL^0_{22}
        \end{bmatrix},
        \quad
        \cL^{e_i}=\begin{bmatrix}
            0 & 0_{1\times N} \\
            \cL^{e_i}_{21} & \cL^{e_i}_{22}
        \end{bmatrix},
\end{equation}
with $\cL^0_{22} \in \bbR^{N\times N}$ and $\cL^{e_i}_{21}\in \bbR^{N\times 1}$,   $\cL^{e_i}_{22}\in \bbR^{N\times N}$, $i\in [m]$, respectively. 

In the following, we show how to enforce this structure of the Koopman generator also for its data-driven surrogate model.
Here, we use EDMD to obtain a data-based approximation $\cL^u_d$ of the true Koopman generator $\cL^{u}$.
Based on $d^{\bar{u}}$ data points for $\bar{u}\in\{0,e_1,...,e_m\}$ and the dictionary $\bbV$, we define the ($(N+1)\times d^{\bar{u}}$)-matrices $X^{e_i}$ for $i\in[m]$ and the ($N\times d^{\bar{u}}$)-matrices $X^0$, $Y^{\bar{u}}$ for $\bar{u}\in\{0,e_1,...,e_m\}$, where
\begin{subequations}\label{eq:data-matrices}
    \begin{align}
    X^{e_i} &\coloneqq \begin{bmatrix}
        \Phi(x_1^{e_i})
        & \cdots & 
        \Phi(x_{d^{e_i}}^{e_i})
    \end{bmatrix},
    \\
    X^{0} &\coloneqq 
    \begin{bmatrix}
        0_{N\times 1} & I_N
    \end{bmatrix}
    \begin{bmatrix}
        \Phi(x_1^{0}) & \cdots & \Phi(x_{d^0}^{0})
    \end{bmatrix},
    \\
    Y^{\bar{u}} &\coloneqq 
    \begin{bmatrix}
        0_{N\times 1} & I_N
    \end{bmatrix}
    \begin{bmatrix}
        \cL^{\bar{u}}\Phi(x_1^{\bar{u}}) & \cdots & \cL^{\bar{u}}\Phi(x_{d^{\bar{u}}}^{\bar{u}})
    \end{bmatrix},
\end{align}
\end{subequations}
with $\cL^{\bar{u}}\Phi(x_j^{\bar{u}}) = \begin{bmatrix} 
    (\cL^{\bar{u}} \phi_0)(x_j^{\bar{u}}) & \cdots & (\cL^{\bar{u}} \phi_N)(x_j^{\bar{u}})
\end{bmatrix}^\top$ and $
    (\cL^{\bar{u}} \phi_k)(x_j^{\bar{u}})
    = \grad \phi_k(x_j^{\bar{u}})^\top \dot{x}_j^{\bar{u}}
$
for $k\in[0:N]$ and $j\in[d^{\bar{u}}]$. 
Then, the generator EDMD-based surrogate for the Koopman generator $\left.P_\bbV \cL^{\bar{u}}\right|_\bbV$ is given by
\begin{equation}\label{eq:Kooman-generator-approximation}
    \cL^0_d=\begin{bmatrix} 
        0 & 0_{1\times N} \\
        0_{N\times 1} & A
    \end{bmatrix},
    \
    \cL^{e_i}_d=\begin{bmatrix}
        0 & 0_{1\times N} \\
        B_{0,i} & \hat{B}_i
    \end{bmatrix}
\end{equation}
for $i\in[m]$, where 
\begin{subequations}\label{eq:EDMD-optimization}
    \begin{align}
        A 
        &= \argmin_{A\in\bbR^{N\times N}} 
        \left\| Y^0 - A X^0 \right\|_\mathrm{F},
        \\
        \begin{bmatrix}B_{0,i} & \hB_i\end{bmatrix} 
        &= \argmin_{\substack{
            B_{0,i}\in\bbR^{N}, \\
            \hB_i\in\bbR^{N\times N}
        }}
        \left\|
            Y^{e_i} 
            - \begin{bmatrix}B_{0,i} & \hB_i\end{bmatrix} X^{e_i}
        \right\|_\mathrm{F}
    \end{align}
\end{subequations}
for $i\in[m]$ and $\|\cdot\|_\mathrm{F}$ denotes the Frobenius norm. 
The explicit solutions to these problems read $
    A = Y^{0} (X^0)^\dagger
$ and $
    \begin{bmatrix}B_{0,i} & \hB_i\end{bmatrix} = Y^{e_i}(X^{e_i})^\dagger
$.
Motivated by the control-affine structure of the Koopman generator~\eqref{eq:Koopman-generator}, we define its data-based approximation as 
\begin{equation}
    \cL_d^u = \cL_d^0 + \sum_{i=1}^m u_i(\cL_d^{e_i} - \cL_d^0).
\end{equation}

\subsubsection{EDMD error bounds}
The following proposition captures the error between the Koopman generator and its data-driven EDMD estimate.
\begin{proposition}[\!\!{\cite[Thm.~3]{schaller:worthmann:philipp:peitz:nuske:2023}}]\label{prop:sampling-error-bound-Koopman-generator}
    Suppose that Assumption~\ref{ass:invariance-of-dictionary} holds and the data samples are i.i.d.
    Further, let an error bound $c_r>0$ and a probabilistic tolerance $\delta \in (0,1)$ be given. Then, there is an amount of data $d_0\in\bbN$ such for all $d\geq d_0$ and all $u\in\bbU$ we have the error bound 
    \begin{equation}\label{eq:sampling-error-bound-Koopman-generator}
        \|\cL^u\vert_\bbV - \cL_d^u\| \leq c_r
    \end{equation}
    with probability $1-\delta$.
\end{proposition}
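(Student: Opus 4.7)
The plan is to reduce the uniform-in-$u$ bound to finitely many pointwise bounds for the constant controls $\bar{u}\in\{0,e_1,\ldots,e_m\}$ and then invoke a standard EDMD concentration argument for each of them. Exploiting the control-affine structure~\eqref{eq:Koopman-generator} together with its data-based counterpart, define the pointwise errors $E_{\bar{u}} := \cL^{\bar{u}}|_\bbV - \cL_d^{\bar{u}}$ and write
\begin{equation*}
    \cL^u|_\bbV - \cL_d^u \;=\; E_0 + \sum_{i=1}^m u_i \bigl(E_{e_i} - E_0\bigr).
\end{equation*}
The triangle inequality together with compactness of $\bbU$, which gives $C_u := \max_{u\in\bbU}\|u\|_1 < \infty$, yields
\begin{equation*}
    \|\cL^u|_\bbV - \cL_d^u\| \;\leq\; (1 + 2 C_u)\max_{\bar{u}} \|E_{\bar{u}}\|
\end{equation*}
uniformly in $u\in\bbU$. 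Hence, it suffices to bound each of the $m+1$ errors $\|E_{\bar{u}}\|$ by $c_r/(1+2C_u)$ with individual failure probability at most $\delta/(m+1)$, and then to invoke a union bound over the $m+1$ independent data sets.

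For a fixed $\bar{u}$, I would use Assumption~\ref{ass:invariance-of-dictionary} to identify the true restriction $\cL^{\bar{u}}|_\bbV$ with the minimizer of the population least-squares problem whose empirical counterpart is~\eqref{eq:EDMD-optimization}, while $\cL_d^{\bar{u}}$ is the corresponding empirical minimizer, given explicitly via the pseudoinverse $Y^{\bar{u}}(X^{\bar{u}})^\dagger$. Subtracting and multiplying by the population Gram matrix decomposes $E_{\bar{u}}$ into the perturbation of the empirical cross-covariance $\tfrac{1}{d^{\bar{u}}} Y^{\bar{u}}(X^{\bar{u}})^\top$ and the empirical Gram $\tfrac{1}{d^{\bar{u}}} X^{\bar{u}}(X^{\bar{u}})^\top$ relative to their expectations. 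Since the lifting $\Phi$ and its image $\cL^{\bar{u}}\Phi$ under the generator are bounded on the compact set $\bbX$ for each constant control $\bar{u}$, a matrix Bernstein or matrix Hoeffding inequality provides $\mathcal{O}(d^{-1/2})$ concentration of both empirical objects, yielding a sample complexity $d \geq d_1(c_r,\delta,N,m,C_u)$ that enforces the pointwise bound with the prescribed confidence.

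The main obstacle is the quantitative control of the pseudoinverse: a naive perturbation estimate for $(X^{\bar{u}})^\dagger$ degrades with the reciprocal of the least singular value of the empirical Gram matrix, and the latter must be kept uniformly away from zero. This is addressed by exploiting non-degeneracy of the dictionary on $\bbX$ to lower-bound the smallest eigenvalue of the \emph{population} Gram matrix, and then by a standard high-probability argument ensuring that the empirical smallest eigenvalue does not drop below, say, half of that value once $d$ is large enough. Choosing $d_0$ as the maximum of the resulting sample complexities over the $m+1$ constant controls then gives the claimed uniform bound, which is exactly the content of~\cite{schaller:worthmann:philipp:peitz:nuske:2023}.
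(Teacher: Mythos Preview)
The paper does not prove this proposition at all: it is imported verbatim as \cite[Thm.~3]{schaller:worthmann:philipp:peitz:nuske:2023}, and the only additional content is the subsequent Remark, which quotes the explicit sample complexity~\eqref{eq:mindata} from that reference. Your sketch is therefore not to be compared against a proof in the present paper but against the structure revealed by that Remark.

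On that comparison, your overall architecture is correct and matches the cited result: exploit the control-affine decomposition~\eqref{eq:Koopman-generator} to reduce to the $m+1$ constant controls, bound each empirical generator against its population counterpart via concentration of the empirical Gram and cross-covariance matrices, control the inverse Gram by a lower bound on its smallest eigenvalue, and finish with a union bound. Two quantitative differences are worth noting. First, the form of $d_0$ in~\eqref{eq:mindata} scales like $1/(\tilde{\delta}\,\tilde{c}_r^2)$ with variance terms $\|\Sigma_{A^{(k)}}\|_F^2$, $\|\Sigma_C\|_F^2$, which is the signature of a Chebyshev-type (second-moment) argument rather than the matrix Bernstein/Hoeffding route you propose; the latter would yield a $\log(1/\delta)$ dependence and is in fact sharper, so your approach is a legitimate alternative rather than a gap. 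Second, the constant $\tilde{\delta}=\delta/(3(m+1))$ indicates that the cited proof splits into \emph{three} events per control value (two matrix perturbations plus the eigenvalue lower bound), whereas your write-up bundles the eigenvalue control somewhat informally; making that third event explicit would align your constants with~\eqref{eq:mindata}. None of this constitutes a genuine gap---your plan would go through and is essentially the argument of the cited reference.
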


\begin{remark}
    The sufficient amount of data~$d_0$ was derived in~\cite[Thm.~3]{schaller:worthmann:philipp:peitz:nuske:2023}. 
    For given dictionary size $N+1$, probabilistic tolerance $\delta \in (0,1)$, and error bound $c_r>0$, we define $\tilde{\delta} := \frac {\delta}{3(m+1)} \in (0,1)$ and
    \begin{equation*}
        \tilde{c}_r 
        \coloneqq 
        \frac{c_r}{(m+1)\left(1  + \max_{u \in \mathbb{U}} \sum_{i=1}^{m}|u_i| \right)} > 0.
    \end{equation*}
    Let matrices $A^{(k)},C \in \mathbb{R}^{(N+1) \times (N+1)}$, $k\in[0:m]$, be defined by $\left(A^{(k)}\right)_{i,j} = \langle \phi_i,\mathcal{L}^{e_k} \phi_j \rangle_{L^2(\mathbb{X})}$ and $C_{i,j} := \langle \phi_i,\phi_j\rangle_{L^2(\mathbb{X})}$ and set
    \begin{equation*}
        \tilde{c}_{r,k} 
        = \min\left\{1,\frac{1}{\|A^{(k)}\|\|C^{-1}\|}\right\} \cdot \frac {\|A^{(k)}\| \tilde{c}_r}{2\|A^{(k)}\|\|C^{-1}\| + \tilde{c}_r}.
    \end{equation*}
    Then, $d_0\in \mathbb{N}$ for the bound 
    of Proposition~\ref{prop:sampling-error-bound-Koopman-generator} is given by
    \begin{equation}\label{eq:mindata}
        d_0 \geq \max_{k=0,\ldots,m}\frac{(N+1)^2}{\tilde{\delta}\tilde{c}_{r,k}^2} \max\left\{\|\Sigma_{A^{(k)}} \|^2_F, \|\Sigma_{C}\|^2_F\right\}
    \end{equation}
    where $\Sigma_{A^{(k)}}$ and $\Sigma_{C}$ are variance matrices defined via
    \begin{align*}
        &\!\left(\Sigma_{A^{(k)}}\right)_{i,j}^2 
        =  \frac{1}{|\bbX|}\int_\mathbb{X} \phi_i(x)^2 \langle \nabla \phi_j(x), f(x) + g_k(x)\rangle^2\,\mathrm{d}x
        \\
        &\qquad\qquad\quad
        - \left(\frac{1}{|\bbX|}\int_\mathbb{X} \phi_i(x) \langle \nabla \phi_j(x), f(x) + g_k(x) \rangle \,\mathrm{d}x\right)^2\!,
        \\
        &\!\left(\Sigma_{C}\right)_{i,j}^2
        = \frac{1}{|\bbX|}\int_\mathbb{X} \phi_i(x)^2 \phi_j(x)^2\,\mathrm{d}x - \!\bigg[\frac{1}{|\bbX|}\int_\mathbb{X} \phi_i(x)\phi_j(x)\,\mathrm{d}x\bigg]^2
    \end{align*}
    for $(i,j)\in [N+1]^2$, where $|\bbX|$ denotes the Lebesgue measure of $\bbX$.
\end{remark}
In the following, we use the abbreviation $x(t) = x(t;\hat{x},u)$. 
Using the Koopman generator, the observables along the nonlinear dynamics~\eqref{eq:dynamics-nonlinear} satisfy
\begin{align}
    \ddt{}\Phi(x(t))
    &= \cL^u\Phi(x(t))
    \nonumber\\ \label{eq:dynamics-lifted-general}
    &= \cL_d^u \Phi(x(t)) + (\cL^u - \cL^u_d)\Phi(x(t)).
\end{align}
Thus, the dynamics along the data-driven surrogate $\cL^u_d$ can be interpreted as a perturbed version of the original lifted dynamics with the remainder
\begin{align}
    r(x,u) 
    &= (\cL^u - \cL^u_d)\Phi(x) 
    \nonumber\\\label{eq:remainder-plus-minus-Phi0}
    &= (\cL^u - \cL^u_d) (\Phi(x) - \Phi(0) + \Phi(0)),
\end{align}
which in view of Proposition~\ref{prop:sampling-error-bound-Koopman-generator} satisfies the bound 
\begin{equation}\label{eq:finite-gain-bound-remainder-offset}
    \|r(x,u)\| \leq c_r (\|\Phi(x) - \Phi(0)\| + \|\Phi(0)\|)
\end{equation} 
for all $x\in \bbX$ and $u\in \bbU$ with probability $1-\delta$.

This error bound, however, has a significant drawback in view of controller design:
The upper bound contains a state-independent part and thus, as $\Phi(0)\neq 0$ due to the constant observable, it does not vanish for $x=0$ and $u=0$, which is the targeted equilibrium point of the dynamics.
To address this issue by means of an alternative (and \emph{proportional}) error bound, we adapt the line of reasoning of~\cite[Prop.~8]{bold:grune:schaller:worthmann:2023} leading to new error estimates for the compression of the Koopman generator in the lifted space, which guarantees that the residual vanishes at the origin, i.e., $r(x,u)=0$ for $x=0$ and $u=0$.
To this end, we define the reduced lifted state 
$
    \hat{\Phi}(x)=\begin{bmatrix}0_{N\times 1} & I_N\end{bmatrix}\Phi(x)  
$
satisfying $\hat{\Phi}(0)=0$ which allows the characterization of the lifted dynamics with a proportional error bound on the remainder as follows.
\begin{proposition}\label{lm:dynamics-lifted}
    Suppose that Assumption~\ref{ass:invariance-of-dictionary} holds and the data samples are i.i.d. 
    Further, let a probabilistic tolerance $\delta\in(0,1)$ and an amount of data $d_0\in\bbN$ be given. Then, there is a constant $c_r\in\mathcal{O}(\nicefrac{1}{\sqrt{\delta d_0}})$ such that the lifted dynamics~\eqref{eq:dynamics-lifted-general} are equivalently captured by
    \begin{multline}\label{eq:dynamics-lifted}
        \ddt{}\hat{\Phi}(x(t)) 
        = A \hat{\Phi}(x(t)) 
        + B_0 u(t) 
        \\
        + \sum_{i=1}^m u_i(t)B_i\hat{\Phi}(x(t)) 
        + \hat{r}(x(t),u(t))
    \end{multline}
    with $
        B_0=\begin{bmatrix}
            B_{0,1}&\cdots & B_{0,m}
        \end{bmatrix}
    $, $B_i=\hB_i - A$, where the remainder is bounded by
    \begin{equation}\label{eq:finite-gain-bound-remainder}
        \|\hat{r}(x,u)\| \leq c_r(\|\hat{\Phi}(x)\| + \|u\|)
    \end{equation}
    for all $(x,u)\in \bbX \times \bbU$ with probability $1-\delta$.
\end{proposition}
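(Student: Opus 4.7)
The plan is to start from the exact lifted dynamics $\ddt{}\Phi(x(t))=\mathcal{L}^u\Phi(x(t))$ and to massage both the deterministic part and the residual so that the constant first observable is eliminated and a proportional bound emerges. Concretely, I would left-multiply by $[0_{N\times 1}\ I_N]$ to obtain $\ddt{}\hat{\Phi}(x(t)) = [0_{N\times 1}\ I_N]\mathcal{L}^u_d\Phi(x(t)) + [0_{N\times 1}\ I_N](\mathcal{L}^u-\mathcal{L}^u_d)\Phi(x(t))$, and expand $\mathcal{L}^u_d = \mathcal{L}^0_d + \sum_i u_i(\mathcal{L}^{e_i}_d-\mathcal{L}^0_d)$. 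Writing $\Phi(x)=[1;\hat{\Phi}(x)]$ and using the explicit block structure of $\mathcal{L}^0_d$ and $\mathcal{L}^{e_i}_d$ from~\eqref{eq:Kooman-generator-approximation}, the deterministic part reduces to $A\hat{\Phi}+\sum_i u_i(B_{0,i}+\hat{B}_i\hat{\Phi}-A\hat{\Phi})$, which is exactly $A\hat{\Phi}+B_0 u+\sum_i u_i B_i\hat{\Phi}$ with $B_0=[B_{0,1}\ \cdots\ B_{0,m}]$ and $B_i=\hat{B}_i-A$.

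For the residual $\hat{r}(x,u) = [0_{N\times 1}\ I_N](\mathcal{L}^u-\mathcal{L}^u_d)\Phi(x)$, I split as
\begin{equation*}
    \hat{r}(x,u) = [0_{N\times 1}\ I_N]\Bigl[(\mathcal{L}^0-\mathcal{L}^0_d)\Phi(x) + \sum_{i=1}^m u_i\bigl((\mathcal{L}^{e_i}-\mathcal{L}^0)-(\mathcal{L}^{e_i}_d-\mathcal{L}^0_d)\bigr)\Phi(x)\Bigr].
\end{equation*}
The key observation is that $\mathcal{L}^0$ and $\mathcal{L}^0_d$ share the zero first column structure from~\eqref{eq:Koopman-generator-structure} (this is where $f(0)=0$ enters, as already noted in the excerpt), so $(\mathcal{L}^0-\mathcal{L}^0_d)\Phi(x) = [0;(\mathcal{L}^0_{22}-A)\hat{\Phi}(x)]$ is purely proportional to $\hat{\Phi}(x)$. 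For the second summand, only the zero first row is available, so the difference evaluated at $\Phi(x)$ picks up a constant part of the form $\mathcal{L}^{e_i}_{21}-B_{0,i}$ plus a $\hat{\Phi}$-linear part; however, this term is weighted by $u_i$ and hence vanishes as $u\to 0$.

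Combining these bounds via the triangle inequality and submultiplicativity yields
\begin{equation*}
    \|\hat{r}(x,u)\| \le \|\mathcal{L}^0-\mathcal{L}^0_d\|\,\|\hat{\Phi}(x)\| + \sum_{i=1}^m |u_i|\,\bigl(\|\mathcal{L}^{e_i}-\mathcal{L}^{e_i}_d\|+\|\mathcal{L}^0-\mathcal{L}^0_d\|\bigr)\bigl(1+\|\hat{\Phi}(x)\|\bigr).
\end{equation*}
Now I apply Proposition~\ref{prop:sampling-error-bound-Koopman-generator} to both $\mathcal{L}^0-\mathcal{L}^0_d$ and $\mathcal{L}^{e_i}-\mathcal{L}^{e_i}_d$ (one can work directly with the per-generator bound in~\cite[Thm.~3]{schaller:worthmann:philipp:peitz:nuske:2023}, or specialise the composite bound to the vertices $\bar u\in\{0,e_1,\dots,e_m\}$), which gives some common constant of order $1/\sqrt{\delta d_0}$. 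Using $\sum_i|u_i|\le\sqrt{m}\|u\|$ together with compactness of $\mathbb{U}$ to absorb the cross term $\|u\|\|\hat{\Phi}(x)\|$ into the $\|\hat{\Phi}(x)\|$ term, and redefining the constant (still of the order $\mathcal{O}(1/\sqrt{\delta d_0})$), I arrive at the claimed proportional bound $\|\hat{r}(x,u)\|\le c_r(\|\hat{\Phi}(x)\|+\|u\|)$ on $\mathbb{X}\times\mathbb{U}$ with probability $1-\delta$.

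The main obstacle is not the algebra but the bookkeeping around the constant observable: naively transferring the bound~\eqref{eq:finite-gain-bound-remainder-offset} leaves the offset $\Phi(0)\neq 0$, which would ruin stability at the target equilibrium. The trick that carries the proof is to exploit exactly which rows/columns of the generators are structurally zero, so that the offending constant only ever multiplies the input $u$ and can therefore be subsumed in the $\|u\|$-term of the bound.
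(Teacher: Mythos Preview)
Your proof is correct and follows the same overall strategy as the paper: exploit the structural zero rows/columns of the Koopman generators so that the contribution of the constant observable is always multiplied by $u$, and then invoke Proposition~\ref{prop:sampling-error-bound-Koopman-generator}. The only notable difference is in how the residual is organised: the paper writes $\Phi(x)=(\Phi(x)-\Phi(0))+\Phi(0)$ and applies the uniform bound $\|\mathcal{L}^u-\mathcal{L}^u_d\|\le c_r$ directly, which gives $\|\hat{r}(x,u)\|\le \|\mathcal{L}^u-\mathcal{L}^u_d\|\,\|\hat{\Phi}(x)\|+\|\mathcal{L}^e_{21}-B_0\|\,\|u\|$ \emph{without} a cross term, whereas your generator-by-generator split produces the extra $\|u\|\,\|\hat{\Phi}(x)\|$ term that you then absorb via compactness of $\mathbb{U}$; both routes yield a constant of the same order $\mathcal{O}(1/\sqrt{\delta d_0})$.
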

\begin{proof}
    Let $(x,u)\in \bbX \times \bbU$. The structure of the Koopman generators~\eqref{eq:Koopman-generator-structure} ensures $
        (\cL^0\Phi)(x)=\left[\begin{smallmatrix}0\\(\cL^0_{22}\hat{\Phi})(x)\end{smallmatrix}\right]
    $ and $
        (\cL^{e_i}\Phi)(x)=\left[\begin{smallmatrix}0\\\cL^{e_i}_{21} + (\cL^{e_i}_{22}\hat{\Phi})(x)\end{smallmatrix}\right]
    $,
    $i\in[m]$.
    By defining $\cL^e_{21} = \begin{bmatrix} \cL^{e_1}_{21} & \cdots & \cL^{e_m}_{21}\end{bmatrix}$ and $\tilde{\cL}^{e_i}_{22}=\cL^{e_i}_{22}-\cL^0_{22}$ we deduce 
    \begin{equation*}
        (\cL^u\Phi)(x) = \begin{bmatrix}0\\(\cL^0_{22}\hat{\Phi})(x) + \cL^e_{21} u + \sum_{i=1}^m u_i (\tilde{\cL}^{e_i}_{22}\hat{\Phi})(x)\end{bmatrix}.
    \end{equation*}
    Due to~\eqref{eq:Kooman-generator-approximation} the same structure applies for the corresponding EDMD approximation, i.e.,
    \begin{equation}
        \cL^u_d\Phi(x) = \begin{bmatrix}0\\A\hat{\Phi}(x) + B_0 u + \sum_{i=1}^m u_i B_i\hat{\Phi}(x)\end{bmatrix}
    \end{equation}
    with $
        B_0=\begin{bmatrix}
            B_{0,1}&\cdots & B_{0,m}
        \end{bmatrix}
    $ and $B_i=\hB_i - A$.
    Since $\hat{\Phi}(0)=0$ holds, $(\cL^u - \cL^u_d)\Phi(0)=\left[\begin{smallmatrix}0\\(\cL^e_{21} - B_0)u\end{smallmatrix}\right]$ such that
    \begin{equation}\label{eq:proof:dynamics-lifted-reduced-input}
        \|(\cL^u - \cL^u_d)\Phi(0)\| \leq \|\cL^e_{21} - B_0\|\|u\|.
    \end{equation}
    Moreover, by construction, the first row of the remainder $r(x,u)$ in~\eqref{eq:dynamics-lifted-general} vanishes, i.e., 
    \begin{equation}
        r(x,u)=(\cL^u-\cL^u_d)\Phi(x) = \begin{bmatrix}0\\\hat{r}(x,u)\end{bmatrix},
    \end{equation}
    which implies that the first row of the right-hand side of~\eqref{eq:dynamics-lifted-general} vanishes. Together with the vanishing derivative of the constant function along the flow, i.e., $\frac{\mathrm{d}}{\mathrm{d}t} \phi_0(x(t)) \equiv 0$, this yields~\eqref{eq:dynamics-lifted} when omitting the trivial first line.
    Finally, we observe $\Phi(x)-\Phi(0) = \begin{bmatrix}0\\\hat{\Phi}(x)\end{bmatrix}$ and conclude 
    \begin{align}
        &\|\hat{r}(x,u)\| 
        = \|r(x,u)\|
        \nonumber\\
        \overset{\eqref{eq:remainder-plus-minus-Phi0}}&{\leq} \|\cL^u-\cL^u_d\|\|\Phi(x)-\Phi(0)\| + \|(\cL^u - \cL^u_d)\Phi(0)\|
        \nonumber\\
        \overset{\eqref{eq:proof:dynamics-lifted-reduced-input}}&{\leq} \|\cL^u-\cL^u_d\|\|\hat{\Phi}(x)\| + \|\cL^e_{21} - B_0\|\|u\|.
    \end{align}
    Further, as $e_i\in \mathbb{U}$ for all $i\in [m]$, then the estimate in~\eqref{eq:sampling-error-bound-Koopman-generator} not only bounds $\|\cL^u-\cL^u_d\|$ but yields also the (conservative) upper bound 
    \begin{equation*}
        \|\cL^e_{21} - B_0\| = \|(\cL^u - \cL^u_d)\Phi(0)\| \leq \|\cL^u - \cL^u_d\|\|\Phi(0)\|\leq c_r.
    \end{equation*}
    Hence, invoking the bound on $\|\cL^u - \cL^u_d\|$ of Proposition~\ref{prop:sampling-error-bound-Koopman-generator} yields the claim.
\end{proof}
Proposition~\ref{lm:dynamics-lifted} establishes a finite-dimensional \emph{bilinear} representation of the nonlinear system~\eqref{eq:dynamics-nonlinear}, where the remainder is bounded via the norm of the lifted state $\hat{\Phi}(x)$ and the input $u$.
Due to the construction of $\hat{\Phi}$, this removes the constant offset at the origin $x=0$ occurring in the error bound~\eqref{eq:finite-gain-bound-remainder-offset} and allows convex data-driven controller design for the nonlinear system~\eqref{eq:dynamics-nonlinear}.
Moreover, the constant observable $\phi_0\equiv 1$ results in the state-independent input mapping via $B_0$.
We note that, as becomes clear from the last part of the proof of Proposition~\ref{lm:dynamics-lifted}, the constant appearing with the control value $u$ in~\eqref{eq:finite-gain-bound-remainder} can be tightened further by exploiting the derivation of~\eqref{eq:sampling-error-bound-Koopman-generator}, which is left for future research.
Last, we briefly discuss Assumption~\ref{ass:invariance-of-dictionary} in view of our central result above. Due to the invariance of the dictionary, the Koopman generator coincides with a matrix, see~\eqref{eq:generator_matrix_coincides}. 
However, even if this does not hold, i.e., if $\mathcal{L}^u$ maps to a function space, we note that most of the arguments presented in the proof of Proposition~\ref{lm:dynamics-lifted} may be carried over upon utilizing, in addition, novel projection-error bounds~\cite{kohne:philipp:schaller:schiela:worthmann:2024}. 
In particular, the structure of the Koopman generator formulated in~\eqref{eq:Koopman-generator-structure} remains valid upon considering the partition of the function space into the span of the constant function $\phi_0$ and the reduced lifted state $\hat{\Phi}$.

For the remainder of the paper, we omit the time argument for ease of notation, wherever it is clear from the context.
\section{Data-driven control of nonlinear systems}\label{sec:controller-design-simple}
In this section, we develop a controller for the nonlinear system~\eqref{eq:dynamics-nonlinear} based on the data-driven Koopman representation derived in Section~\ref{sec:DD-system-representation}. 
We provide a tutorial-style exposition of the derivation in a simplified setting with single-input systems and state-feedback controllers that depend linearly on the lifted state $\hat{\Phi}(x)$, i.e., $\mu(x) = K \hat{\Phi}(x)$ holds with a matrix $K \in \mathbb{R}^{1 \times N}$.
Section~\ref{sec:reformulation-as-robust-linear-control-problem} introduces a reformulation of the corresponding lifted system dynamics as a robust linear control problem. 
Then, we use convex optimization to solve the control problem in Section~\ref{sec:solution-via-CO}.
The case with multi-input systems and more general controller parametrizations will be considered in Section~\ref{sec:controller-design-general}.

\subsection{Reformulation as robust linear control problem}\label{sec:reformulation-as-robust-linear-control-problem}
First, we recall the lifted system dynamics~\eqref{eq:dynamics-lifted} for scalar inputs ($m=1$), i.e., 
\begin{equation}\label{eq:dynamics-lifted-single-input}
    \ddt{}
    \hat{\Phi}(x) 
    = A \hat{\Phi}(x) + B_0 u + B_1 \hat{\Phi}(x) u + \hat{r}(x,u),
\end{equation}
where the nonlinear remainder $\hat{r}(x,u)$ is bounded by the proportional bound~\eqref{eq:finite-gain-bound-remainder} on $\bbX \times \bbU$.
Our goal is to design a state-feedback controller $u=\mu(x)$ such that the origin of the closed-loop system is exponentially stable.
To this end, we view~\eqref{eq:dynamics-lifted-single-input} as an uncertain bilinear system, where the uncertainty is given by $\hat{r}(x,u)$.
To cope with the uncertainty, we employ linear robust control techniques enabling methods from convex optimization.

First, we treat the remainder in the lifted dynamics~\eqref{eq:dynamics-lifted} as uncertainty and stabilize the system robustly using that $\hat{r}(x,u)$ satisfies the error bound~\eqref{eq:finite-gain-bound-remainder} for all $x\in\bbX,u\in\bbU$.
To this end, we define the remainder $\varepsilon:\bbR^N\times\bbR^m\to\bbR^N$ depending on the lifted state as 
\begin{equation}\label{eq:epsilon-definition}
    \varepsilon(v_1,v_2)=\hat{r}\left(\begin{bmatrix}I_n&0_{n\times N-n}\end{bmatrix}v_1,v_2\right).
\end{equation}
According to~\eqref{eq:finite-gain-bound-remainder}, $\varepsilon$ satisfies also a proportional bound, i.e.,
\begin{equation}\label{eq:finite-gain-bound-epsilon}
    \| \varepsilon(\hat{\Phi}(x),u)\| 
    = \|\hat{r}(x,u)\|
    \leq c_r(\|\hat{\Phi}(x)\| + \|u\|)
\end{equation}
for all $x\in\bbX,u\in\bbU$.
Second, we introduce an additional uncertainty $\Delta_\Phi$ corresponding to the term $\hat{\Phi}(x)$.
To this end, we need to ensure that $x$ and, in particular, $\hat{\Phi}(x)$ are bounded.
Thus, we introduce an a priori fixed set $\mathbf{\Delta}_\Phi$ of the form $
    \mathbf{\Delta}_\Phi \coloneqq \{\Delta_\Phi\in\bbR^N \mid \eqref{eq:condition-x-in-Z}~\text{holds}\}
$ 
with
\begin{equation}\label{eq:condition-x-in-Z}
    \begin{bmatrix}
        \Delta_\Phi \\ 1
    \end{bmatrix}^\top 
    \begin{bmatrix}
        Q_z & S_z \\ S_z^\top & R_z
    \end{bmatrix}
    \begin{bmatrix}
        \Delta_\Phi \\ 1
    \end{bmatrix}
    \geq 0
\end{equation}
for fixed $Q_z\in\bbR^{N\times N}$, $S_z\in\bbR^{N}$, $R_z\in\bbR$ with $Q_z \prec 0$ and $R_z > 0$ for which the inverse
$
    \left[\begin{smallmatrix}
        \tQ_z & \tS_z \\
        \tS_z^\top & \tR_z
    \end{smallmatrix}\right]
    \coloneqq 
    \left[\begin{smallmatrix}
        Q_z & S_z \\
        S_z^\top & R_z
    \end{smallmatrix}\right]^{-1}    
$
exists.
The parametrization~\eqref{eq:condition-x-in-Z} of $\mathbf{\Delta}_\Phi$ includes, e.g., a region described by $\Delta_\Phi^\top \Delta_\Phi\leq c$ with $c>0$ when choosing $Q_z=-I$, $S_z=0$, and $R_z=c$.
Our later derived controller design will ensure that the closed-loop system satisfies $\hat{\Phi}(x(t))\in\mathbf{\Delta}_\Phi$ for all $t$.
More precisely, we stabilize the bilinear system~\eqref{eq:dynamics-lifted-single-input} by robustly stabilizing
\begin{equation}\label{eq:dynamics-lifted-single-input-Delta}
    \ddt{}\hat{\Phi}(x) 
    = A \hat{\Phi}(x) + B_0 u + B_1 \Delta_\Phi u + \varepsilon(\hat{\Phi}(x),u)
\end{equation}
for all $\Delta_\Phi\in\mathbf{\Delta}_\Phi$ and perturbation functions $\varepsilon$ satisfying~\eqref{eq:finite-gain-bound-epsilon} for all $x\in\bbX,u\in\bbU$.

Now, we write the lifted system described by the uncertain bilinear system~\eqref{eq:dynamics-lifted-single-input} using a linear fractional representation (LFR). 
In particular, we consider the LFR corresponding to~\eqref{eq:dynamics-lifted-single-input-Delta} given by
\begin{subequations}\label{eq:LFR-nonlinear-open-loop-single-input}
    \begin{align}
        \begin{bmatrix}
            \ddt{}\hat{\Phi}(x) \\ u \\ \begin{bmatrix}v_1\\v_2\end{bmatrix}
        \end{bmatrix}
        &= \begin{bmatrix}
            A & B_0 & B_1 & I \\
            0 & I & 0 & 0 \\
            \begin{bmatrix} I \\ 0\end{bmatrix} & \begin{bmatrix}0\\I\end{bmatrix} & 0 & 0
        \end{bmatrix}
        \begin{bmatrix}
            \hat{\Phi}(x) \\ u \\ w_\Phi \\ w_r
        \end{bmatrix},\\
        w_\Phi &=\Delta_\Phi u, \label{eq:LFR-nonlinear-open-loop-single-input:uncertainty-bilinearity} \\
        w_r &= \varepsilon(v_1,v_2) \label{eq:LFR-nonlinear-open-loop-single-input:uncertainty-remainder}
    \end{align}
\end{subequations}
with $\Delta_\Phi\in\mathbf{\Delta}_\Phi$ and $\varepsilon$ satisfying~\eqref{eq:finite-gain-bound-epsilon} for all $x\in\bbX,u\in\bbU$.
An LFR as in~\eqref{eq:LFR-nonlinear-open-loop-single-input} is a common representation of uncertain systems~\cite{zhou:doyle:glover:1996}. 
Here, the LFR is exposed to two uncertainties, where we reduce the bilinear control problem to a linear control problem subject to an unknown state-dependent nonlinearity, i.e., the (known) lifted state $\Delta_\Phi=\hat{\Phi}(x)$, and the nonlinearity $\varepsilon(\hat{\Phi}(x),u)$ corresponding to the (unknown) remainder $\hat{r}(x,u)$.
Note that the LFR~\eqref{eq:LFR-nonlinear-open-loop-single-input} is highly structured with several zero terms and, thus, the dynamics depend linearly on the uncertainties. 
However, we note that the LFR framework offers the possibility to adapt the subsequent analysis to more general (fractional) uncertainty descriptions in future work.

\subsection{Solution via linear matrix inequalities}\label{sec:solution-via-CO}

In the following, we design a state-feedback controller that is linear in the lifted state, i.e., $\mu(x)=K\hat{\Phi}(x)$ for some $K\in\bbR^{1\times N}$.
Setting $A_K = A+B_0K$ and substituting the input by the feedback in the open-loop LFR~\eqref{eq:LFR-nonlinear-open-loop-single-input}, we obtain the corresponding closed-loop LFR
\begin{subequations}\label{eq:LFR-nonlinear-closed-loop-single-input}
    \begin{align}
        \begin{bmatrix}
            \ddt{}\hat{\Phi}(x) \\ \mu(x) \\ \begin{bmatrix}v_1\\v_2\end{bmatrix}
        \end{bmatrix}
        &= \begin{bmatrix}
            A_K & B_1 & I \\
            K & 0 & 0 \\
            \begin{bmatrix} I \\ K \end{bmatrix} & 0 & 0
        \end{bmatrix}
        \begin{bmatrix}
            \hat{\Phi}(x) \\ w_\Phi \\ w_r
        \end{bmatrix},\\
        w_\Phi &= \Delta_\Phi \mu(x), \\
        w_r &= \varepsilon(v_1,v_2)
    \end{align}
\end{subequations}
with $\Delta_\Phi\in\mathbf{\Delta}_\Phi$ and $\varepsilon$ satisfying~\eqref{eq:finite-gain-bound-epsilon} for all $x\in\bbX,u\in\bbU$.

The following theorem establishes a controller design method guaranteeing exponential stability for the nonlinear system~\eqref{eq:dynamics-nonlinear} by solving an LMI feasibility problem.
\begin{theorem}\label{thm:stability-condition-LFR-single-input}
    Let Assumption~\ref{ass:invariance-of-dictionary} hold. 
    Suppose a desired error bound $c_r>0$ and a probabilistic tolerance $\delta \in (0,1)$ in the sense of Proposition~\ref{lm:dynamics-lifted} are given.
    If there exist a matrix $0\prec P=P^\top\in \bbR^{N\times N}$, a matrix $L\in\bbR^{1\times N}$, and scalars $\lambda>0$, $\nu>0$, $\tau>0$
    such that~\eqref{eq:stability-condition-LFR-single-input}%
    \begin{figure*}[!t]
        \begin{equation}\label{eq:stability-condition-LFR-single-input}
            \begin{bmatrix}
                -A P - B_0L - PA^\top - L^\top B_0^\top
                -\tau I_{N}
                & -L^\top - \lambda B_1 \tS_z
                & -\begin{bmatrix}P & L^\top\end{bmatrix}
                & \lambda B_1
                \\ 
                -L - \lambda \tS_z^\top B_1^\top
                & \lambda \tR_z
                & 0
                & 0
                \\
                \begin{bmatrix}P & L^\top\end{bmatrix}^\top
                & 0
                & 0.5 \tau c_r^{-2} I_{N+1}
                & 0
                \\
                \lambda B_1^\top
                & 0
                & 0
                & - \lambda \tQ_z^{-1}
            \end{bmatrix}
            \succ 0
        \end{equation}
        \normalsize
        \medskip
        \vspace*{-0.6\baselineskip}
        \hrule
        \vspace*{-0.6\baselineskip}
    \end{figure*}
    and
    \begin{equation}\label{eq:stability-condition-LFR-invariance}
        \begin{bmatrix}
            P & P S_z & P & 0 \\
            S_z^\top P & \nu R_z & 0 & \nu \\
            P & 0 & -\nu Q_z^{-1} & 0 \\
            0 & \nu & 0 & 1
        \end{bmatrix}
        \succeq 0
    \end{equation}
    hold,
    then there exists an amount of data $d_0\in\bbN$ such that for all $d\geq d_0$ the controller $\mu(x) = L P^{-1} \hat{\Phi}(x)$ achieves exponential stability of the nonlinear system~\eqref{eq:dynamics-nonlinear} for all initial conditions $\hat{x}\in\cX_\mathrm{RoA} := \{x\in\bbR^n\,|\, \hat{\Phi}(x)^\top P^{-1} \hat{\Phi}(x) \leq 1\}$ with probability $1-\delta$.
\end{theorem}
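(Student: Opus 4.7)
My plan is to establish $V(x)\coloneqq\hat{\Phi}(x)^\top P^{-1}\hat{\Phi}(x)$ as a Lyapunov function for the closed loop under $\mu(x)=LP^{-1}\hat{\Phi}(x)$, and then to conclude exponential stability via a standard comparison argument. The quadratic sandwich $\|x\|\le\|\hat{\Phi}(x)\|\le L_\Phi\|x\|$ from~\eqref{eq:philower}--\eqref{eq:phiupper} together with $P\succ 0$ immediately yields $\alpha_1(\|x\|)\le V(x)\le \alpha_2(\|x\|)$ with $\alpha_i\in\cK$. The two remaining ingredients are a pointwise decrease $\dot V<0$ on $\cX_\mathrm{RoA}\setminus\{0\}$ and forward invariance of $\cX_\mathrm{RoA}$; these are extracted from~\eqref{eq:stability-condition-LFR-single-input} and~\eqref{eq:stability-condition-LFR-invariance}, respectively, once the sample size is chosen large enough for Proposition~\ref{lm:dynamics-lifted} to apply with the prescribed constants $c_r,\delta$.

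For the decrease I substitute $u=K\hat{\Phi}(x)$ with $K=LP^{-1}$ into the perturbed lifted dynamics~\eqref{eq:dynamics-lifted-single-input-Delta}. Setting $w_\Phi=\Delta_\Phi K\hat{\Phi}(x)$ and $w_r=\varepsilon(\hat{\Phi}(x),K\hat{\Phi}(x))$, the derivative reads $\dot V=2\hat{\Phi}^\top P^{-1}[A_K\hat{\Phi}+B_1 w_\Phi+w_r]$, which is a quadratic form in $\zeta=[\hat{\Phi}^\top,w_\Phi^\top,w_r^\top]^\top$. I then apply the S-procedure with two quadratic constraints on $\zeta$: first, the full-block multiplier inequality on $\Delta_\Phi$ obtained by scaling~\eqref{eq:condition-x-in-Z} by $u^2$ and substituting $u=K\hat{\Phi}$, with multiplier $\lambda>0$; second, the residual bound $\|w_r\|^2\le 2c_r^2(\|\hat{\Phi}\|^2+\|K\hat{\Phi}\|^2)$ inferred from~\eqref{eq:finite-gain-bound-epsilon} via $(a+b)^2\le 2(a^2+b^2)$, with multiplier $\tau>0$. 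A congruence transformation with $\mathrm{diag}(P,I,I,I)$, the substitution $L=KP$, a Schur complement against the residual block, and the standard replacement of the primal multiplier $[Q_z,S_z;S_z^\top,R_z]$ by its inverse $[\tilde Q_z,\tilde S_z;\tilde S_z^\top,\tilde R_z]$ then produce exactly~\eqref{eq:stability-condition-LFR-single-input}. Positivity of this LMI yields $\dot V\le-\tau\|\hat{\Phi}\|^2$ and hence $\dot V\le-\gamma V$ with $\gamma>0$ determined by $\tau$ and the spectrum of $P$.

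For forward invariance of $\cX_\mathrm{RoA}$, I would invoke~\eqref{eq:stability-condition-LFR-invariance}: successively Schur-complementing out the $-\nu Q_z^{-1}$ block and the scalar $\nu$-block reduces it to the implication $\hat{\Phi}^\top P^{-1}\hat{\Phi}\le 1\Rightarrow [\hat{\Phi}^\top\ 1][Q_z,S_z;S_z^\top,R_z][\hat{\Phi}^\top\ 1]^\top\ge 0$, i.e., $\hat{\Phi}(x)\in\mathbf{\Delta}_\Phi$ for every $x\in\cX_\mathrm{RoA}$. This legitimizes the choice $\Delta_\Phi=\hat{\Phi}(x)$ in the preceding S-procedure along any trajectory in $\cX_\mathrm{RoA}$, so $\dot V<0$ there, and because $\cX_\mathrm{RoA}$ is a sublevel set of $V$, $x(t)$ cannot leave it. Applicability of Proposition~\ref{lm:dynamics-lifted} along the trajectory---which requires $(x,u)\in\bbX\times\bbU$---follows, for a sufficiently small RoA, from $0\in\operatorname{int}(\bbU)$ and $\bbX$ being a compact neighborhood of the origin.

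The main technical obstacle will be the LMI bookkeeping required to align the $\dot V$-quadratic after S-procedure exactly with~\eqref{eq:stability-condition-LFR-single-input}. In particular, matching the block $0.5\tau c_r^{-2} I_{N+1}$ against the factor $2c_r^2$ in the residual constraint (which may hide a reparametrization such as $\tau\mapsto 1/\tau$ inside the Schur complement), and verifying that the full-block multiplier $[Q_z,S_z;S_z^\top,R_z]$ indeed transforms into its inverse in the $\tilde Q_z,\tilde S_z,\tilde R_z$ positions while keeping the final LMI linear in the decision variables $(P,L,\lambda,\tau,\nu)$, calls for careful sign tracking and a consistent ordering of the congruence and Schur-complement steps.
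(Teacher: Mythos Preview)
Your plan mirrors the paper's proof almost exactly: the same Lyapunov function $V(x)=\hat\Phi(x)^\top P^{-1}\hat\Phi(x)$, the same two-part split into ``$\cX_{\mathrm{RoA}}\subset\{\hat\Phi(x)\in\mathbf\Delta_\Phi\}$'' via~\eqref{eq:stability-condition-LFR-invariance} and ``$\dot V<0$ on $\cX_{\mathrm{RoA}}$'' via~\eqref{eq:stability-condition-LFR-single-input}, and the same S-procedure with the two multipliers. Two points are worth flagging.

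First, what you call ``the standard replacement of the primal multiplier $[Q_z,S_z;S_z^\top,R_z]$ by its inverse'' is not a bookkeeping step that falls out of congruence plus Schur complement; it is precisely the \emph{dualization lemma} of Scherer--Weiland, which the paper invokes explicitly. That lemma is what converts the primal LFR condition (in which the $\Delta_\Phi$- and $\varepsilon$-constraints enter through $[Q_z,S_z;S_z^\top,R_z]$ and $\Pi_r$) into the dual LMI~\eqref{eq:stability-condition-LFR-single-input} written in $[\tilde Q_z,\tilde S_z;\tilde S_z^\top,\tilde R_z]$ and $\Pi_r^{-1}$; it relies on the inertia structure $Q_z\prec0$, $R_z>0$ and on the specific outer factors of the closed-loop LFR, and simultaneously inverts the Lyapunov block ($\tilde P\mapsto P$). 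You have correctly identified this as the hard step, but you should name the tool rather than treat it as routine, since congruence with $\mathrm{diag}(P,I,I,I)$ and a single Schur complement alone will not produce the $\tilde Q_z,\tilde S_z,\tilde R_z$ pattern.

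Second, your claimed rate $\dot V\le-\tau\|\hat\Phi\|^2$ is not what the LMI delivers. In~\eqref{eq:stability-condition-LFR-single-input} the $-\tau I_N$ in the $(1,1)$ block and the $0.5\tau c_r^{-2}I_{N+1}$ block are \emph{both} parts of the encoding $\tau\Pi_r^{-1}$ of the residual constraint (cf.~the paper's identity $\mathrm{diag}(-\tau I_N,0,0.5\tau c_r^{-2}I_{N+1})=\tau[\,\cdot\,]\Pi_r^{-1}[\,\cdot\,]^\top$); after dualization they become $\tau^{-1}\Pi_r$ applied to $[\varepsilon;\hat\Phi;K\hat\Phi]$, which is nonnegative whenever~\eqref{eq:finite-gain-bound-epsilon} holds and therefore gives only $\dot V<0$. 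The exponential rate comes from the strictness of~\eqref{eq:stability-condition-LFR-single-input}, not from $\tau$.
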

\begin{proof}
    We divide the proof into two parts, where we first show that all $x\in\cX_\mathrm{RoA}$ satisfy $\hat{\Phi}(x)\in\mathbf{\Delta}_\Phi$ and then conclude positive invariance of $\cX_\mathrm{RoA}$ together with exponential stability of the closed-loop system~\eqref{eq:dynamics-nonlinear-feedback} for all $\hat{x}\in\cX_\mathrm{RoA}$.
    
    \textit{Part I: $x\in\cX_\mathrm{RoA}$ implies $\hat{\Phi}(x)\in\mathbf{\Delta}_\Phi$:~}
    The presented representation of the nonlinear system as an LFR~\eqref{eq:LFR-nonlinear-closed-loop-single-input} requires $\hat{\Phi}(x)\in\mathbf{\Delta}_\Phi$.
    Thus, first observe that~\eqref{eq:stability-condition-LFR-invariance} is equivalent to 
    \begin{equation*}
        \begin{bmatrix}
            \frac{1}{\nu}P & \frac{1}{\nu}PS_z \\
            \frac{1}{\nu}S_z^\top P & R_z
        \end{bmatrix}
        + \begin{bmatrix}
            \frac{1}{\nu}P \\ 0 
        \end{bmatrix}
        Q_z 
        \begin{bmatrix}
            \frac{1}{\nu}P \\ 0
        \end{bmatrix}^\top
        - \begin{bmatrix}
            0 \\ 1
        \end{bmatrix}
        \nu
        \begin{bmatrix}
            0 \\ 1
        \end{bmatrix}^\top
        \succeq 0,
    \end{equation*}
    where we first divide the inequality by $\nu$ and then apply the Schur complement (cf.~\cite{boyd:vandenberghe:2004}) twice.
    Multiplying from the left and from the right by $\diag(\nu P^{-1},1)$ yields
    \begin{equation*}
        \begin{bmatrix}
            Q_z & S_z \\ S_z^\top & R_z
        \end{bmatrix}
        - \nu \begin{bmatrix}
            - P^{-1} & 0 \\ 0 & 1
        \end{bmatrix}
        \succeq 0.
    \end{equation*}
    Multiplying from left and right by $
        \begin{bmatrix}
            \hat{\Phi}(x)^\top & 1
        \end{bmatrix}^\top
    $ and its transpose, respectively, where $x\in\cX_\mathrm{RoA}$, results in $\hat{\Phi}(x)\in\mathbf{\Delta}_\Phi$ for all $x\in\cX_\mathrm{RoA}$ (cf.\ the S-procedure~\cite{scherer:weiland:2000,boyd:vandenberghe:2004}), where $\mathbf{\Delta}_\Phi$ is defined in~\eqref{eq:condition-x-in-Z}. 

    \textit{Part II: Positive invariance of $\cX_\mathrm{RoA}$ and exponential stability:~}
    For positive invariance of $\cX_\mathrm{RoA}$, we have to show that $x(t+\chi)\in\cX_\mathrm{RoA}$ for all $x(t)\in\cX_\mathrm{RoA}$ and $\chi>0$. 
    According to the definition of $\cX_\mathrm{RoA}$, we define the Lyapunov function candidate $V(x) = \hat{\Phi}(x)^\top P^{-1} \hat{\Phi}(x)$ such that positive invariance follows if $\ddt{}V(x(t)) \leq 0$ for all trajectories satisfying $x(t)\in\cX_\mathrm{RoA}$, $t\geq0$.

    First, we define $K = L P^{-1}$ and recall $A_K = A + B_0 K$.
    Then, we apply the Schur complement to \eqref{eq:stability-condition-LFR-single-input} to obtain
    \begin{multline}\label{eq:proof-Schur-single-input}
        \begin{bmatrix}
            -A_K P - PA_K^\top
            -\tau I_N
            & \star
            & \star
            \\ 
            -KP - \lambda \tS_z^\top B_1^\top
            & \lambda \tR_z
            & \star
            \\
            -\begin{bmatrix}P\\KP\end{bmatrix}
            & 0
            & 0.5 \tau c_r^{-2} I_{N+1}
        \end{bmatrix}
        \\
        + \lambda^{-1}  
        \begin{bmatrix}
            \lambda B_1 \\ 0 \\ 0
        \end{bmatrix}
        \tQ_z
        \begin{bmatrix}
            \lambda B_1 \\ 0 \\ 0
        \end{bmatrix}^\top
        \succ 0.
    \end{multline}
    Note that 
    \begin{equation}\label{eq:proof-Schur-stability}
        \begin{bmatrix}
            -A_K P - PA_K^\top & \star & \star \\
            - KP & 0 & \star \\
            -\begin{bmatrix}P\\KP\end{bmatrix} & 0 & 0
        \end{bmatrix}
        \\
        = \begin{bmatrix}
            A_K & -I \\
            K & 0 \\
            \begin{bmatrix}I\\K\end{bmatrix} & 0
        \end{bmatrix}
        \begin{bmatrix}
            0 & P \\ P & 0
        \end{bmatrix}
        \begin{bmatrix}
            \star
        \end{bmatrix}^\top
    \end{equation}
    and 
    \begin{multline*}
        \begin{bmatrix}
            0 & - \lambda B_1\tS_z & 0 \\
            \star & \lambda \tR_z & 0 \\
            \star & \star & 0
        \end{bmatrix}
        + \lambda
        \begin{bmatrix}
            B_1 \\ 0 \\ 0
        \end{bmatrix}
        \tQ_z
        \begin{bmatrix}
            B_1 \\ 0 \\ 0
        \end{bmatrix}^\top
        \\
        = \lambda
        \begin{bmatrix}
            B_1 & 0 \\ 
            0 & -I \\
            0 & 0
        \end{bmatrix}
        \begin{bmatrix}
            \tQ_z & \tS_z \\ \tS_z^\top & \tR_z
        \end{bmatrix}
        \begin{bmatrix}
            B_1 & 0 \\ 
            0 & -I \\
            0 & 0
        \end{bmatrix}^\top.
    \end{multline*}
    Moreover, we define 
    \begin{equation}\label{eq:proof-multiplier-epsilon}
        \Pi_r = 
        \begin{bmatrix}
            -I_N & 0 \\ 0 & 2c_r^{2}I_{N+m}
        \end{bmatrix},
        \;
        \Pi_r^{-1} = 
        \begin{bmatrix}
            -I_N & 0 \\ 0 & 0.5c_r^{-2}I_{N+m}
        \end{bmatrix},        
    \end{equation}
    to obtain
    \begin{equation}\label{eq:proof-Schur-error}
        \begin{bmatrix}
            -\tau I_N & 0 & 0 \\
            0 & 0 & 0 \\
            0 & 0 & 0.5\tau c_r^{-2} I_{N+1}
        \end{bmatrix}
        = \tau 
        \begin{bmatrix}
            I & 0 \\ 0 & 0 \\ 0 & -I
        \end{bmatrix}
        \Pi_r^{-1}
        \begin{bmatrix}\star\end{bmatrix}^\top.
    \end{equation}
    Thus, we write~\eqref{eq:proof-Schur-single-input} equivalently as
    \begin{equation}
        \Psi^\top \diag\left(
            \begin{bmatrix} 
                0 & P \\ 
                P & 0
            \end{bmatrix}
            ,
            \lambda
            \begin{bmatrix}
                \tQ_z & \tS_z \\ \tS_z^\top & \tR_z
            \end{bmatrix}
            , 
            \tau
            \Pi_r^{-1}
        \right)
        \Psi 
        \succ 0,
    \end{equation}
    where 
    \begin{equation}
        \Psi^\top = 
        \left[\def\arraystretch{1.15}\begin{array}{cc|cc|cc}
            A_K & -I & B_1 & 0 & I & 0 \\
            K & 0 & 0 & -I & 0 & 0 \\
            \begin{bmatrix}I\\K\end{bmatrix} & 0 & 0 & 0 & 0 & -I
        \end{array}\right].
    \end{equation}
    Then, we apply the dualization lemma~\cite[Lm. 4.9]{scherer:weiland:2000} to arrive at
    \begin{equation}\label{eq:proof-primal-single-input}
        \tilde{\Psi}^\top \diag\left(
            \begin{bmatrix} 
                0 & \tP \\ 
                \tP & 0
            \end{bmatrix}
            ,
            \lambda^{-1}
            \begin{bmatrix}
                Q_z & S_z \\ S_z^\top & R_z
            \end{bmatrix}
            , 
            \tau^{-1}
            \Pi_r
        \right)
        \tilde{\Psi} 
        \succ 0,
    \end{equation}
    where $\tP=P^{-1}$ and
    \begin{equation}
        \tilde{\Psi}^\top = 
        \left[\def\arraystretch{1.15}\begin{array}{cc|cc|cc}
            I & A_K^\top & 0 & K^\top & 0 & \begin{bmatrix}I&K^\top\end{bmatrix} \\
            0 & B_1^\top & I & 0 & 0 & 0 \\
            0 & I & 0 & 0 & I & 0
        \end{array}\right]
    \end{equation}
    is constructed similarly to the discussion in~\cite[Sec. 8.1.2]{scherer:weiland:2000}.
    By multiplying~\eqref{eq:proof-primal-single-input} from the left and from the right by $
        \begin{bmatrix}
            \hat{\Phi}(x)^\top & \mu(x)\Delta_\Phi^\top & \varepsilon(\hat{\Phi}(x),\mu(x))^\top
        \end{bmatrix}^\top
    $ and its transpose, respectively, where $x\in\cX_\mathrm{RoA}$, $\mu(x)=K\hat{\Phi}(x)$, $\Delta_\Phi\in\mathbf{\Delta}_\Phi$, and $\varepsilon(\hat{\Phi}(x),\mu(x))$ satisfies~\eqref{eq:finite-gain-bound-epsilon}, we obtain
    \begin{multline}\label{eq:proof-S-procedure-single-input}
        \begin{bmatrix}\star\end{bmatrix}^\top 
        \begin{bmatrix}
            0 & \tP \\ \tP & 0
        \end{bmatrix}
        \begin{bmatrix}
            \hat{\Phi}(x) \\ 
            A_K \hat{\Phi}(x) + B_1 \Delta_\Phi\mu(x) + \varepsilon(\hat{\Phi}(x),\mu(x))
        \end{bmatrix}
        \\
        + \lambda^{-1} 
        \begin{bmatrix}\star\end{bmatrix}^\top 
        \begin{bmatrix}
            Q_z & S_z \\
            S_z^\top & R_z
        \end{bmatrix}
        \begin{bmatrix}
            \Delta_\Phi \mu(x) \\
            K \hat{\Phi}(x)
        \end{bmatrix}
        \\
        + \tau^{-1}
        \begin{bmatrix}\star\end{bmatrix}^\top 
        \Pi_r
        \begin{bmatrix}
            \varepsilon(\hat{\Phi}(x),\mu(x)) \\
            \begin{bmatrix}\hat{\Phi}(x) \\ K \hat{\Phi}(x)\end{bmatrix}
        \end{bmatrix}
        < 0.
    \end{multline}
   For $\Delta_\Phi\in\mathbf{\Delta}_\Phi$ and $\varepsilon(\hat{\Phi}(x),\mu(x))$ satisfying~\eqref{eq:finite-gain-bound-epsilon}, the last two summands of~\eqref{eq:proof-S-procedure-single-input} are nonnegative as
    \begin{equation*}
        \mu(x)^2
        \begin{bmatrix}
            \Delta_\Phi \\
            1
        \end{bmatrix}^\top
        \begin{bmatrix}
            Q_z & S_z \\
            S_z^\top & R_z
        \end{bmatrix}
        \begin{bmatrix}
            \Delta_\Phi \\
            1
        \end{bmatrix}
        \geq 0
    \end{equation*}
    with the definition of $\mathbf{\Delta}_\Phi$ in~\eqref{eq:condition-x-in-Z} and 
    \begin{align*}
        &\begin{bmatrix}\star\end{bmatrix}^\top 
        \Pi_r
        \begin{bmatrix}
            \varepsilon(\hat{\Phi}(x),\mu(x)) \\
            \begin{bmatrix}\hat{\Phi}(x) \\ K \hat{\Phi}(x)\end{bmatrix}
        \end{bmatrix}
        \nonumber\\
        &\quad
        = 2 c_r^2 (\|\hat{\Phi}(x)\|^2 + \|\mu(x)\|^2) -\|\varepsilon(\hat{\Phi}(x),\mu(x))\|^2
        \nonumber\\
        &\quad
        \geq c_r^2 (\|\hat{\Phi}(x)\| + \|\mu(x)\|)^2 -\|\varepsilon(\hat{\Phi}(x),\mu(x))\|^2
        \geq 0.
    \end{align*}
    Hence,~\eqref{eq:proof-S-procedure-single-input} implies
    \begin{multline}\label{eq:proof-nonlinear-dissipation-inequality-substitute-single-input}
        0 > \hat{\Phi}(x)^\top \tP (A_K \hat{\Phi}(x) + B_1 \Delta_\Phi\mu(x) + \varepsilon(\hat{\Phi}(x),\mu(x)))
        \\
        + (A_K \hat{\Phi}(x) + B_1 \Delta_\Phi \mu(x) + \varepsilon(\hat{\Phi}(x),\mu(x)))^\top \tP \hat{\Phi}(x)
    \end{multline}
    if $x\in\cX_\mathrm{RoA}\setminus\{0\}$, $\Delta_\Phi\in\mathbf{\Delta}_\Phi$, and if $\varepsilon(\hat{\Phi}(x),\mu(x))$ satisfies~\eqref{eq:finite-gain-bound-epsilon} (again, we refer to the S-procedure~\cite{scherer:weiland:2000,boyd:vandenberghe:2004}).
    
    Recall the Lyapunov candidate function $V(x) = \hat{\Phi}(x)^\top \tP \hat{\Phi}(x)$ which satisfies $V(0)=0$ and
    \begin{subequations}\label{eq:proof-Lyapunov-upper-lower-bounds}
        \begin{alignat}{2}
            V(x) &\leq \sigma_{\max}(\tP) \|\hat{\Phi}(x)\|^2& &\leq \sigma_{\max}(\tP) L_\Phi^2\|x\|^2,
            \\
            V(x) &\geq \sigma_{\min}(\tP) \|\hat{\Phi}(x)\|^2& &\geq \sigma_{\min}(\tP) \|x\|^2
        \end{alignat}        
    \end{subequations}
    due to $\|\hat{\Phi}(x)\| = \|\Phi(x)-\Phi(0)\|$,~\eqref{eq:philower}, and~\eqref{eq:phiupper}.
    Further, substituting $\ddt{}\hat{\Phi}(x(t)) = A_K \hat{\Phi}(x(t)) + B_1 \Delta_\Phi(t)\mu(x(t)) + \varepsilon(\hat{\Phi}(x(t)),\mu(x(t)))$ in~\eqref{eq:proof-nonlinear-dissipation-inequality-substitute-single-input} leads to 
    \begin{multline}\label{eq:proof-Lyapunov-decay-single-input}
        \tfrac{\mathrm{d}}{\mathrm{d}t}V(x(t)) 
        = \hat{\Phi}(x(t))^\top \tP \bigl(\tfrac{\mathrm{d}}{\mathrm{d}t}\hat{\Phi}(x(t))\bigl)
        \\
        + \bigl(\tfrac{\mathrm{d}}{\mathrm{d}t}\hat{\Phi}(x(t))\bigl)^\top \tP \hat{\Phi}(x(t)) < 0
    \end{multline}
    if $x(t)\in\cX_\mathrm{RoA}\setminus\{0\}$, $\Delta_\Phi(t)\in\mathbf{\Delta}_\Phi$, and if $\varepsilon(\hat{\Phi}(x),\mu(x))$ satisfies~\eqref{eq:finite-gain-bound-epsilon}.
    Hence, the controller $\mu(x(t))=K\hat{\Phi}(x(t))$ guarantees the decrease~\eqref{eq:proof-Lyapunov-decay-single-input} for solutions of the LFR~\eqref{eq:LFR-nonlinear-closed-loop-single-input} and, thus, also for the lifted system~\eqref{eq:dynamics-lifted-single-input-Delta} if $x(t)\in\cX_\mathrm{RoA}\setminus\{0\}$, $\Delta_\Phi(t)\in\mathbf{\Delta}_\Phi$, and if $\varepsilon(\hat{\Phi}(x(t)),\mu(x(t)))$ satisfies~\eqref{eq:finite-gain-bound-epsilon}.
    Recall that $\hat{\Phi}(x(t))\in\mathbf{\Delta}_\Phi$ for all $x(t)\in\cX_\mathrm{RoA}$ according to Part I and $\varepsilon(\hat{\Phi}(x(t)),\mu(x(t)))=\hat{r}(x(t),u(t))$ due to~\eqref{eq:epsilon-definition}.
    Thus, the controller $\mu(x(t))=K\hat{\Phi}(x(t))$ guarantees~\eqref{eq:proof-Lyapunov-decay-single-input} in particular for system~\eqref{eq:dynamics-lifted-single-input} for all $x(t)\in\cX_\mathrm{RoA}\setminus\{0\}$ if $\hat{r}(x(t),\mu(x(t)))$ satisfies~\eqref{eq:finite-gain-bound-remainder}.
    This concludes positive invariance of $\cX_\mathrm{RoA}$ and, thus, robust exponential stability of~\eqref{eq:dynamics-lifted-single-input} for all $x(t)\in \cX_\mathrm{RoA}$ if $\hat{r}(x(t),\mu(x(t)))$ satisfies~\eqref{eq:finite-gain-bound-remainder}. 
    Finally, we exploit Proposition~\ref{prop:sampling-error-bound-Koopman-generator} and Lemma~\ref{lm:dynamics-lifted} to deduce that the lifted system based on the data-driven approximation $\cL^u_d$ of the Koopman generator and its remainder $\hat{r}(x(t),\mu(x(t)))$ for a data length $d\geq d_0$ satisfies the obtained bound with probability $1-\delta$. 
    Hence, we conclude exponential stability of the nonlinear system~\eqref{eq:dynamics-nonlinear} for all $x(t)\in\cX_\mathrm{RoA}$ with probability $1-\delta$.
\end{proof}
Theorem~\ref{thm:stability-condition-LFR-single-input} establishes a robust controller design w.r.t.\ two uncertainties, namely the uncertain remainder $\hat{r}(x,u)$ and the (artificial) uncertainty $\Phi(x)$ which are both bounded.
We emphasize that the controller design relies on an SDP in terms of LMIs which can be efficiently solved.
The design of this robust controller is based on the sampling error bounds derived in~\cite{schaller:worthmann:philipp:peitz:nuske:2023}.
In particular, the contributions of the presented controller design are the obtained closed-loop guarantees for the unknown nonlinear system~\eqref{eq:dynamics-nonlinear} based on data.
We note that our controller design exploits the fact that the control input also enters state-independently into the lifted dynamics via $B_0$. 
More precisely,~\eqref{eq:stability-condition-LFR-single-input} can only be feasible if $(A,B_0)$ is stabilizable.

A key technical ingredient of the proof is to derive Lyapunov stability directly for the \emph{true} nonlinear system with state $x$ instead of restricting it to the lifted bilinear system with state $\hat{\Phi}(x)$.
Thus, the considered error bound based on the Koopman approximation can be directly used when proving closed-loop stability of the nonlinear system~\eqref{eq:dynamics-nonlinear} without additional consideration of the resulting submanifold $\operatorname{im}(\hat{\Phi})$ resulting by the chosen lifting, cf.~\cite{goor:mahony:schaller:worthmann:2023}.
In addition, the tutorial-style presentation and proof of Theorem~\ref{thm:stability-condition-LFR-single-input} contribute to a comprehensive controller design that can be widely used in various disciplines.
Even though we only obtain probabilistic guarantees in Theorem~\ref{thm:stability-condition-LFR-single-input}, we exploit the direct relation between the necessary amount of data $d_0$ for a given desired accuracy of the lifting and certainty about the closed-loop guarantees presented in~\eqref{eq:mindata}. 
Hence, the proposed controller design yields end-to-end guarantees for the underlying true nonlinear system~\eqref{eq:dynamics-nonlinear} based on measured data.
Note that the resulting RoA is a sublevel set of the Lyapunov function and can always be scaled such that $\cX_\mathrm{RoA}\subseteq \bbX$.

The proposed controller design is summarized in Algorithm~\ref{alg:controller-design-single-input}. 
Overall, we combine the EDMD estimate of the Koopman generator in~\eqref{eq:EDMD-optimization} and the error bounds of Proposition~\ref{prop:sampling-error-bound-Koopman-generator} with the robust controller design in Theorem~\ref{thm:stability-condition-LFR-single-input} leading to closed-loop guarantees of the nonlinear system~\eqref{eq:dynamics-nonlinear}.
The Koopman-based surrogate model obtained here only requires solving the least-squares optimization problem~\eqref{eq:EDMD-optimization}, which scales well to higher-dimensional systems. 
The controller design approach, on the other hand, relies on solving an SDP with complexity $\cO(N^6)$. 
Thus, future work should be devoted to including structure-exploiting SDP techniques~\cite{deklerk:2010,gramlich:holicki:scherer:ebenbauer:2023}. 

\begin{algorithm}[tb]
    \caption{State-feedback controller design for single inputs corresponding to Theorem~\ref{thm:stability-condition-LFR-single-input}.}\label{alg:controller-design-single-input}
    \textbf{Input:}
    \begin{itemize}
        \item Data $\{x_j^{\bar{u}}, \dot{x}_j^{\bar{u}}\}_{j=1}^{d^{\bar{u}}}$ for ${\bar{u}}\in\{0,1\}$, where $d^{\bar{u}}\geq d_0$ for $d_0$ according to~\eqref{eq:mindata}
        \item Lifting $\Phi(x) = \begin{bmatrix}1 & \hat{\Phi}(x)\end{bmatrix}^\top$ defined in \eqref{eq:lifting-function} 
        \item Probabilistic tolerance $\delta\in(0,1)$ and error bound $c_r > 0$
        \item Ellipsoidal region $\mathbf{\Delta}_\Phi$ defined by $Q_z\prec 0$, $S_z$, $R_z > 0$
    \end{itemize}
    \textbf{Output:} Controller $u=\mu(x)$ for the nonlinear system~\eqref{eq:dynamics-nonlinear} which is exponentially stabilizing with probability $1-\delta$ if $\hat{x}\in\cX_\mathrm{RoA}$

    \emph{Data-driven system representation:}
    \begin{algorithmic}
        \State Arrange the data according to~\eqref{eq:data-matrices} in $X^{\bar{u}}, Y^{\bar{u}}$
        \State Solve the optimization problem~\eqref{eq:EDMD-optimization} to obtain the data-based system matrices $A$, $B_0$, $B_1$ according to Lemma~\ref{lm:dynamics-lifted}
    \end{algorithmic}
    
    \emph{Controller design:}
    \begin{algorithmic}
        \State Solve the LMI feasibility problem in Theorem~\ref{thm:stability-condition-LFR-single-input}
        \If{successful}
            \State Define the controller $\mu(x) = LP^{-1}\hat{\Phi}(x)$
        \EndIf
    \end{algorithmic}
\end{algorithm}
Theorem~\ref{thm:stability-condition-LFR-single-input} allows to compute state-feedback laws of the form $\mu(x)=K\hat{\Phi}(x)$ that stabilize nonlinear single-input systems. 
In Section~\ref{sec:controller-design-general} (Theorem~\ref{thm:stability-condition-LFR}), we generalize this result into multiple directions.
In particular, we exploit that the uncertainty signal $w_{\Phi}$ in~\eqref{eq:LFR-nonlinear-closed-loop-single-input} is, in fact, measurable during closed-loop operation given that $\Delta_\Phi=\hat{\Phi}(x)$ is known.
This allows us to parametrize the controller as $\mu(x)=K\hat{\Phi}(x) + K_w w_\Phi$ for some matrix $K_w$. 
In contrast to Theorem~\ref{thm:stability-condition-LFR-single-input}, this yields a state feedback which depends nonlinearly on $\hat{\Phi}(x)$ as $\mu(x) = (I-K_w\hat{\Phi}(x))^{-1}K\hat{\Phi}(x)$.
The resulting more flexible controller structure provides less conservative stability conditions and, in general, a larger RoA $\cX_\mathrm{RoA}$ than in Theorem~\ref{thm:stability-condition-LFR-single-input}.
Further, we address the multi-input case.
\section{General state-feedback laws reducing conservatism}\label{sec:controller-design-general}
After presenting the main idea of the proposed controller design, we now discuss the design for general input dimensions ($m\geq 1$) and a more flexible controller structure reducing conservatism compared to the design in Section~\ref{sec:controller-design-simple}. 
To this end, we recall the lifted dynamics in~\eqref{eq:dynamics-lifted} and define $\tB=\begin{bmatrix}B_1 & \cdots & B_m\end{bmatrix}$ to compactly write the dynamics as 
\begin{equation}\label{eq:dynamics-lifted-Kronecker}
    \ddt{}\hat{\Phi}(x) = A \hat{\Phi}(x) + B_0 u + \tB (u\kron \hat{\Phi}(x)) + \hat{r}(x,u).
\end{equation}
Again, we impose that $\hat{\Phi}(x)\in\mathbf{\Delta}_\Phi$ with $\mathbf{\Delta}_\Phi$ defined prior to \eqref{eq:condition-x-in-Z} and that the remainder $\hat{r}(x,u)$ satisfies~\eqref{eq:finite-gain-bound-remainder} for all $x\in\bbX$ and $u\in\bbU$.
Then, the lifted system~\eqref{eq:dynamics-lifted-Kronecker} can be expressed as the open-loop LFR
\begin{subequations}\label{eq:LFR-nonlinear-open-loop}
    \begin{align}
        \begin{bmatrix}
            \ddt{}\hat{\Phi}(x) \\ u \\ \begin{bmatrix}v_1\\v_2\end{bmatrix}
        \end{bmatrix}
        &= \begin{bmatrix}
            A & B_0 & \tB & I \\
            0 & I & 0 & 0 \\
            \begin{bmatrix} I \\ 0\end{bmatrix} & \begin{bmatrix}0\\I\end{bmatrix} & 0 & 0
        \end{bmatrix}
        \begin{bmatrix}
            \hat{\Phi}(x) \\ u \\ w_\Phi \\ w_r
        \end{bmatrix},\\
        w_\Phi &= (I_m\kron \Delta_\Phi) u, \label{eq:LFR-nonlinear-open-loop:uncertainty-bilinearity}\\
        w_r &= \varepsilon(v_1,v_2) \label{eq:LFR-nonlinear-open-loop:uncertainty-remainder}
    \end{align}
\end{subequations}
with $\Delta_\Phi\in\mathbf{\Delta}_\Phi$ and $\varepsilon$ as defined in~\eqref{eq:epsilon-definition} satisfying the bound~\eqref{eq:finite-gain-bound-epsilon} for all $x\in\bbX,u\in\bbU$.
As main difference to~\eqref{eq:LFR-nonlinear-open-loop-single-input}, the LFR~\eqref{eq:LFR-nonlinear-open-loop} is exposed to the uncertainty $(I_m\kron \Delta_\Phi)$ instead of $\Delta_\Phi$. 
To derive an uncertainty characterization of $(I_m\kron \Delta_\Phi)$, we define the set 
\begin{equation}\label{eq:Delta-representation-Delta-structured}
    \mathbf{\Delta} 
    \coloneqq 
    \left\{
        \Delta \in \bbR^{mN\times m}
    \middle|
        \begin{bmatrix}
            \Delta \\ I
        \end{bmatrix}^\top 
        \Pi_{\Delta}
        \begin{bmatrix}
            \Delta \\ I
        \end{bmatrix}
        \succeq 0
        \;\forall\,
        \Pi_{\Delta} \in \mathbf{\Pi}_{\Delta}
    \right\}
\end{equation}
for some multiplier class $\mathbf{\Pi}_{\Delta}$ defined in the following as a suitable convex cone of symmetric matrices.
More precisely, $\mathbf{\Pi}_\Delta$ needs to be chosen such that $(I_m\kron \Delta_\Phi)\in\mathbf{\Delta}$ for all $\Delta_\Phi\in\mathbf{\Delta}_\Phi$.
Thus, we parameterize the multiplier class via the LMI representation
\begin{equation}\label{eq:Delta-representation-Pi-structured}
    \mathbf{\Pi}_{\Delta} \coloneqq 
    \left\{
        \Pi_{\Delta} 
    \middle|
        \Pi_{\Delta} = \begin{bmatrix}
            \tilde{\Lambda} \kron Q_z & \tilde{\Lambda} \kron S_z \\
            \tilde{\Lambda} \kron S_z^\top & \tilde{\Lambda} \kron R_z
        \end{bmatrix}, 0\preceq\tilde{\Lambda}\in\bbR^{m\times m}
    \right\}.
\end{equation}
Using~\cite[Prop. 2]{strasser:berberich:allgower:2023b}, we deduce that $\Delta\in\mathbf{\Delta}$ if and only if $\Delta=I_m\kron\Delta_\Phi$ with $\Delta_\Phi\in\mathbf{\Delta}_\Phi$. 
In particular, $\mathbf{\Delta}$ with multiplier class $\mathbf{\Pi}_{\Delta}$ exploits the structure of the uncertainty $\Delta=I_m\kron\Delta_\Phi$ without additional conservatism. 
For scalar inputs, i.e., $m=1$, the uncertainty description $\mathbf{\Delta}$ reduces to $\mathbf{\Delta}_\Phi$, i.e., $\mathbf{\Delta}=\mathbf{\Delta}_\Phi$.

In the following, we exploit that we can access the uncertainty $\Delta_\Phi=\hat{\Phi}(x)$ and, hence, $\Delta=I_m\kron\hat{\Phi}(x)$ in the LFR channel~\eqref{eq:LFR-nonlinear-open-loop:uncertainty-bilinearity} by employing gain-scheduling techniques~\cite{scherer:2001} to design a flexible state-feedback controller exponentially stabilizing the closed-loop.
In particular, we design a full-information feedback controller~\cite{doyle:glover:khargonekar:francis:1989,packard:zhou:pandey:leonhardson:balas:1992,astolfi:1997} that depends not only on the lifted state $\hat{\Phi}(x)$ but also on the bilinearity $w_\Phi=\Delta u$.
More precisely, we consider the feedback parametrization
\begin{equation}\label{eq:controller-nonlinear} 
    u = \mu(x) = K\hat{\Phi}(x) + K_w w_\Phi,
\end{equation}
where $K\in\bbR^{m\times N}$ and $K_w\in\bbR^{m\times Nm}$. 
Again, we abbreviate $A_K = A+B_0K$ and set $B_{K_w} = \tB + B_0K_w$.
Substituting \eqref{eq:controller-nonlinear} 
in the open-loop LFR~\eqref{eq:LFR-nonlinear-open-loop}, we obtain the corresponding closed-loop LFR
\begin{subequations}\label{eq:LFR-nonlinear-closed-loop}
    \begin{align}
        \begin{bmatrix}
            \ddt{}\hat{\Phi}(x) \\ \mu(x) \\ \begin{bmatrix}v_1\\v_2\end{bmatrix}
        \end{bmatrix}
        &= \begin{bmatrix}
            A_K & B_{K_w} & I \\
            K & K_w & 0 \\
            \begin{bmatrix} I \\ K \end{bmatrix} & \begin{bmatrix}0\\K_w\end{bmatrix} & 0
        \end{bmatrix}
        \begin{bmatrix}
            \hat{\Phi}(x) \\ w_\Phi \\ w_r
        \end{bmatrix},\\
        w_\Phi &= \Delta \mu(x), \\
        w_r &= \varepsilon(v_1,v_2)
    \end{align}
\end{subequations}%
with $\Delta\in\mathbf{\Delta}$ and where $\varepsilon$ satisfies~\eqref{eq:finite-gain-bound-epsilon} for all $x\in\bbX,u\in\bbU$.
Note that the LFR~\eqref{eq:LFR-nonlinear-closed-loop} reduces to~\eqref{eq:LFR-nonlinear-closed-loop-single-input} in the case $m=1$ and $K_w=0$.

The following theorem establishes a controller design method guaranteeing exponential stability of the nonlinear system~\eqref{eq:dynamics-nonlinear} based on robust exponential stability of the LFR~\eqref{eq:LFR-nonlinear-closed-loop}. 
\begin{theorem}\label{thm:stability-condition-LFR}
    Let Assumption~\ref{ass:invariance-of-dictionary} hold. 
    Suppose a desired error bound $c_r>0$ and a probabilistic tolerance $\delta \in (0,1)$ in the sense of Proposition~\ref{lm:dynamics-lifted} are given.
    If there exists a matrix $0\prec P=P^\top\in\bbR^{N\times N}$, matrices $L\in\bbR^{m\times N}$, $L_w\in\bbR^{m\times Nm}$, a matrix $0\prec \Lambda=\Lambda^\top\in\bbR^{m\times m}$, and scalars $\nu>0$, $\tau>0$ such that~\eqref{eq:stability-condition-LFR}
    \begin{figure*}[!t]
        \small
        \begin{equation}\label{eq:stability-condition-LFR}
            \begin{bmatrix}
                -AP - B_0L - PA^\top - L^\top B_0^\top - \tau I_N
                & \star 
                & \star
                & \star
                \\ 
                -L - (\Lambda \kron \tS_z^\top)\tB^\top
                - (I_m\kron\tS_z^\top)L_w^\top B_0^\top
                & \Lambda \kron \tR_z 
                - L_w (I_m\kron\tS_z) - (I_m\kron\tS_z^\top) L_w^\top
                & \star
                & \star
                \\
                -\begin{bmatrix}P & L^\top\end{bmatrix}^\top
                & -\begin{bmatrix}0&L_w^\top\end{bmatrix}^\top(I_m\kron\tS_z)
                & \tau0.5c_r^{-2}I_{N+m}
                & \star
                \\
                (\Lambda\kron I_N)\tB^\top + L_w^\top B_0^\top
                & L_w^\top
                & -\begin{bmatrix}0&L_w^\top\end{bmatrix}
                & - \Lambda \kron \tQ_z^{-1}
            \end{bmatrix}\succ 0
        \end{equation}
        \normalsize
        \medskip
        \vspace*{-0.6\baselineskip}
        \hrule
        \vspace*{-0.6\baselineskip}
    \end{figure*}
    and
    \begin{equation}
        \begin{bmatrix}
            P & P S_z & P & 0 \\
            S_z^\top P & \nu R_z & 0 & \nu \\
            P & 0 & -\nu Q_z^{-1} & 0 \\
            0 & \nu & 0 & 1
        \end{bmatrix}
        \succeq 0
    \end{equation}
    hold,
    then there exists an amount of data $d_0\in\bbN$ such that for all $d\geq d_0$ the controller
    \begin{equation}\label{eq:controller-nonlinear-explicit}
        \mu(x) = (I-L_w(\Lambda^{-1}\kron \hat{\Phi}(x)))^{-1} L P^{-1} \hat{\Phi}(x)
    \end{equation}
    achieves exponential stability of the nonlinear system~\eqref{eq:dynamics-nonlinear} for all initial conditions $\hat{x}\in\cX_\mathrm{RoA} = \{x\in\bbR^n\,|\, \hat{\Phi}(x)^\top P^{-1} \hat{\Phi}(x) \leq 1\}$ with probability $1-\delta$.
\end{theorem}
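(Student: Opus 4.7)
The plan is to follow the two-part structure of the proof of Theorem~\ref{thm:stability-condition-LFR-single-input} and lift it to the multi-input, gain-scheduled setting by exploiting the block structure of the multiplier class $\mathbf{\Pi}_\Delta$ from~\eqref{eq:Delta-representation-Pi-structured}.

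\textbf{Part I.} The invariance LMI is identical to~\eqref{eq:stability-condition-LFR-invariance}, so the same chain of Schur complements together with the S-procedure yields $\hat{\Phi}(x)\in\mathbf{\Delta}_\Phi$ for every $x\in\cX_\mathrm{RoA}$; by~\cite[Prop.~2]{strasser:berberich:allgower:2023b} this is equivalent to $I_m\kron\hat{\Phi}(x)\in\mathbf{\Delta}$.

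\textbf{Part II.} Define $V(x)=\hat{\Phi}(x)^\top P^{-1}\hat{\Phi}(x)$ and the variable changes $K=LP^{-1}$, $K_w=L_w(\Lambda^{-1}\kron I_N)$, together with $\tilde{\Lambda}=\Lambda^{-1}$. I would first eliminate the last block row/column of~\eqref{eq:stability-condition-LFR} via a Schur complement w.r.t.\ the $-\Lambda\kron\tQ_z^{-1}$ block, producing a block matrix whose structure factors as $\Psi^\top\operatorname{diag}(\left[\begin{smallmatrix}0&P\\P&0\end{smallmatrix}\right],\Pi_\Delta^{-1},\tau\Pi_r^{-1})\Psi\succ 0$ with the same $\Pi_r$ from~\eqref{eq:proof-multiplier-epsilon} and $\Pi_\Delta$ from~\eqref{eq:Delta-representation-Pi-structured} evaluated at $\tilde{\Lambda}=\Lambda^{-1}$. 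The outer factor $\Psi^\top$ encodes the closed-loop LFR~\eqref{eq:LFR-nonlinear-closed-loop} and is assembled analogously to the single-input case, but with the additional columns introduced by $L_w$ so that the $w_\Phi$-channel appears correctly in both the dynamics and the feedback row. The Kronecker structure in $\Pi_\Delta$ is precisely what allows the change of variables $L_w=K_w(\Lambda\kron I_N)$ to render~\eqref{eq:stability-condition-LFR} affine in the decision variables.

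Next, I apply the dualization lemma~\cite[Lm.~4.9]{scherer:weiland:2000} to pass to the primal inequality involving $\tP=P^{-1}$, $\Pi_\Delta$, and $\Pi_r$, and left/right multiply by the vector $\begin{bmatrix}\hat{\Phi}(x)^\top & w_\Phi^\top & \varepsilon^\top\end{bmatrix}^\top$ with $w_\Phi=\Delta\mu(x)$, $\Delta=I_m\kron\hat{\Phi}(x)\in\mathbf{\Delta}$, and $\mu(x)$ given by~\eqref{eq:controller-nonlinear}. The S-procedure then discards the nonnegative $\Pi_\Delta$- and $\Pi_r$-terms exactly as in~\eqref{eq:proof-S-procedure-single-input}, leaving the dissipation inequality $\ddt{}V(x(t))<0$ along the closed-loop lifted dynamics~\eqref{eq:dynamics-lifted} whenever $x(t)\in\cX_\mathrm{RoA}\setminus\{0\}$ and the remainder satisfies~\eqref{eq:finite-gain-bound-remainder}. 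Combined with the sandwich bounds~\eqref{eq:proof-Lyapunov-upper-lower-bounds}, this yields positive invariance of $\cX_\mathrm{RoA}$ and exponential stability, and Proposition~\ref{lm:dynamics-lifted} transfers the statement from the data-based surrogate to the true nonlinear system~\eqref{eq:dynamics-nonlinear} with probability $1-\delta$ for sufficiently many samples $d\geq d_0$.

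\textbf{Main obstacle.} The delicate step is the correct bookkeeping of the Kronecker blocks through the dualization step so that the scheduled feedback $\mu(x)=K\hat{\Phi}(x)+K_ww_\Phi$ with $w_\Phi=(I_m\kron\hat{\Phi}(x))\mu(x)$ produces the explicit form~\eqref{eq:controller-nonlinear-explicit}. Well-posedness of this implicit feedback, i.e., invertibility of $I-K_w(I_m\kron\hat{\Phi}(x))$ on $\cX_\mathrm{RoA}$, is a by-product of the strict inequality~\eqref{eq:stability-condition-LFR}: the $(2,2)$-block after the Schur step forces $\Lambda\kron\tR_z - L_w(I_m\kron\tS_z) - (I_m\kron\tS_z^\top)L_w^\top\succ 0$ on $\mathbf{\Delta}_\Phi$, which excludes a singular $I-K_w(I_m\kron\hat{\Phi}(x))$. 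For $m=1$ and $K_w=0$ everything reduces to Theorem~\ref{thm:stability-condition-LFR-single-input}, providing a sanity check.
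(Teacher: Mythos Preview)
Your proposal is correct and follows essentially the same route as the paper: the same Part~I/Part~II split, the same variable changes $K=LP^{-1}$ and $K_w=L_w(\Lambda^{-1}\kron I_N)$, the Schur complement on the $-\Lambda\kron\tQ_z^{-1}$ block, the factorization $\Psi^\top\operatorname{diag}(\cdot)\Psi\succ0$ with $\Pi_\Delta^{-1}$ having the Kronecker form~\eqref{eq:multiplier-Delta-inverse}, the dualization lemma, and the S-procedure conclusion via~\eqref{eq:proof-S-procedure}. The only addition on your side is the explicit discussion of well-posedness of the implicit feedback~\eqref{eq:controller-nonlinear-explicit}, which the paper's proof does not address; your sketch there is plausible but would need to be made precise, since positivity of the $(2,2)$-block alone does not directly yield invertibility of $I-K_w(I_m\kron\hat\Phi(x))$ without invoking the full quadratic constraint on $\mathbf{\Delta}$.
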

\begin{proof}
    As in Theorem~\ref{thm:stability-condition-LFR-single-input}, we divide the proof into two parts. We first show that all $x\in\cX_\mathrm{RoA}$ satisfy $(I_m\kron\hat{\Phi}(x))\in\mathbf{\Delta}$ and then conclude positive invariance of~$\cX_\mathrm{RoA}$ together with exponential stability of~\eqref{eq:dynamics-nonlinear} for all $\hat{x}\in\cX_\mathrm{RoA}$.
    
    \textit{Part I: $x\in\cX_\mathrm{RoA}$ implies $(I_m\kron \hat{\Phi}(x))\in\mathbf{\Delta}$:~}
    It can be shown analogously to the proof of Theorem~\ref{thm:stability-condition-LFR-single-input} that all $x\in\cX_\mathrm{RoA}$ satisfy $\hat{\Phi}(x)\in\mathbf{\Delta}_\Phi$.
    More precisely,~\cite[Prop.~2]{strasser:berberich:allgower:2023b} implies that $\Delta\in\mathbf{\Delta}$ if and only if $\Delta=I_m\kron\Delta_\Phi$ with $\Delta_\Phi\in\mathbf{\Delta}_\Phi$.
    The claim then follows as $x\in \cX_\mathrm{RoA}$ implies $\hat{\Phi}(x)\in \mathbf{\Delta}_\Phi$.
    
    \textit{Part II: Positive invariance of $\cX_\mathrm{RoA}$ and exponential stability:~}
    For positive invariance of $\cX_\mathrm{RoA}$ together with exponential stability of~\eqref{eq:dynamics-nonlinear} for all $\hat{x}\in\cX_\mathrm{RoA}$, we define analogously to the proof of Theorem~\ref{thm:stability-condition-LFR-single-input} the Lyapunov function candidate $V(x)=\hat{\Phi}(x)^\top P^{-1}\hat{\Phi}(x)$ such that positive invariance of $\cX_\mathrm{RoA}$ can be deduced if $\ddt{}V(x(t))\leq 0$ for all $x(t)\in\cX_\mathrm{RoA}$.

    First, we define $K = L P^{-1}$ and $K_w = L_w(\Lambda^{-1} \kron I_N)$, and denote $A_K=A+B_0K$ and $B_{K_w}=\tB+B_0K_w$. 
    Then, we apply the Schur complement to~\eqref{eq:stability-condition-LFR} to obtain 
    \small
    \begin{multline}\label{eq:proof-Schur}
        \hspace*{-10pt}
        \begin{bmatrix}
            -A_K P - PA_K^\top
            -\tau I_N
            & \star
            & \star
            \\ 
            -KP - (\Lambda\kron\tS_z^\top) B_{K_w}^\top
            & H_{K_w}
            & \star
            \\
            -\begin{bmatrix}P\\KP\end{bmatrix}
            & -\begin{bmatrix}0\\K_w\end{bmatrix}(\Lambda\kron\tS_z)
            & 0.5 \tau c_r^{-2} I_{N+1}
        \end{bmatrix}
        \\
        +  
        \begin{bmatrix}
            B_{K_w}(\Lambda\kron I_N) \\ K_w(\Lambda\kron I_N) \\ -\begin{bmatrix}0\\K_w(\Lambda\kron I_N)\end{bmatrix}
        \end{bmatrix}
        (\Lambda^{-1}\kron\tQ_z)
        \begin{bmatrix}
            \star
        \end{bmatrix}^\top
        \succ 0,
    \end{multline}
    \normalsize
    where $H_{K_w}=(\Lambda\kron \tR_z) - K_w(\Lambda\kron\tS_z) - (\Lambda\kron\tS_z^\top)K_w^\top$.
    Recall~\eqref{eq:proof-Schur-stability},~\eqref{eq:proof-Schur-error} and note that
    \begin{multline*}
        \begin{bmatrix}
            0 & - B_{K_w}(\Lambda\kron\tS_z) & 0 \\
            \star & H_{K_w} & \star \\
            \star & -\begin{bmatrix}0\\K_w\end{bmatrix}(\Lambda\kron\tS_z) & 0
        \end{bmatrix}
        +  
        \begin{bmatrix}
            B_{K_w} \\ K_w \\ -\begin{bmatrix}0\\K_w\end{bmatrix}
        \end{bmatrix}
        (\Lambda\kron\tQ_z)
        \begin{bmatrix}
            \star
        \end{bmatrix}^\top
        \\
        =
        \begin{bmatrix}
            B_{K_w} & 0 \\ 
            K_w & -I \\
            \begin{bmatrix}0\\K_w\end{bmatrix} & 0
        \end{bmatrix}
        \begin{bmatrix}
            \Lambda \kron \tQ_z & \Lambda \kron \tS_z \\
            \Lambda \kron \tS_z^\top & \Lambda \kron \tR_z
        \end{bmatrix}
        \begin{bmatrix}
            \star
        \end{bmatrix}^\top.
    \end{multline*}
    Moreover, we observe 
    \begin{equation*}
        \Pi_{\Delta} 
        = T \left(
            \tilde{\Lambda} \kron \begin{bmatrix} Q_z & S_z \\ S_z^\top & R_z \end{bmatrix}
        \right) T^\top,
    \end{equation*}
    where 
    \begin{equation}\label{eq:proof-T-unitary}
        T = \begin{bmatrix}
            I_m \kron \begin{bmatrix}
                I_N & 0_{N\times 1}
            \end{bmatrix} \\
            I_m \kron \begin{bmatrix}
                0_{1\times N} & 1
            \end{bmatrix}
        \end{bmatrix}
    \end{equation}
    is orthogonal, i.e., $T^{-1}=T^\top$. Then, we directly obtain 
    \begin{equation*}
        \Pi_{\Delta}^{-1} 
        = T \left(
            \tilde{\Lambda}^{-1} \kron \begin{bmatrix} \tQ_z & \tS_z \\ \tS_z^\top & \tR_z \end{bmatrix}
        \right) T^\top
    \end{equation*}
    using $(V\kron W)^{-1} = V^{-1} \kron W^{-1}$. By the definition of $T$ in~\eqref{eq:proof-T-unitary} we deduce that 
    \begin{equation}\label{eq:multiplier-Delta-inverse}
        \Pi_{\Delta}^{-1}
        = \begin{bmatrix}
            \Lambda \kron \tQ_z & \Lambda \kron \tS_z \\
            \Lambda \kron \tS_z^\top & \Lambda \kron \tR_z
        \end{bmatrix}
    \end{equation}
    for $\tilde{\Lambda} = \Lambda^{-1}$.
    Thus, we write~\eqref{eq:proof-Schur} equivalently as
    \begin{equation*}
        \Psi^\top \diag\left(
            \begin{bmatrix} 
                0 & P \\ 
                P & 0
            \end{bmatrix}
            ,
            \Pi_\Delta^{-1}
            , 
            \tau
            \Pi_r^{-1}
        \right)
        \Psi 
        \succ 0,
    \end{equation*}
    where we recall $\Pi_r$ in~\eqref{eq:proof-multiplier-epsilon} and define
    \begin{equation*}
        \Psi^\top = 
        \left[\def\arraystretch{1.15}\begin{array}{cc|cc|cc}
            A_K & -I & B_{K_w} & 0 & I & 0 \\
            K & 0 & K_w & -I & 0 & 0 \\
            \begin{bmatrix}I\\K\end{bmatrix} & 0 & \begin{bmatrix}0\\K_w\end{bmatrix} & 0 & 0 & -I
        \end{array}\right].
    \end{equation*}
    Then, we apply the dualization lemma~\cite[Lm. 4.9]{scherer:weiland:2000} to arrive at 
    \begin{equation}\label{eq:proof-primal}
        \tilde{\Psi}^\top \diag\left(
            \begin{bmatrix} 
                0 & \tP \\ 
                \tP & 0
            \end{bmatrix}
            ,
            \Pi_\Delta
            , 
            \tau^{-1}
            \Pi_r
        \right)
        \tilde{\Psi} 
        \succ 0,
    \end{equation}
    where $\tP = P^{-1}$ and
    \begin{equation*}
        \tilde{\Psi}^\top = 
        \left[\def\arraystretch{1.15}\begin{array}{cc|cc|cc}
            I & A_K^\top & 0 & K^\top & 0 & \begin{bmatrix}I&K^\top\end{bmatrix} \\
            0 & B_{K_w}^\top & I & K_w^\top & 0 & \begin{bmatrix}0&K_w^\top\end{bmatrix} \\
            0 & I & 0 & 0 & I & 0
        \end{array}\right]
    \end{equation*}
    along the lines of~\cite[Sec. 8.1.2]{scherer:weiland:2000}.
    By multiplying~\eqref{eq:proof-primal} from left and right by $
        \begin{bmatrix}
            \hat{\Phi}(x)^\top & (\Delta \mu(x))^\top & \varepsilon(\hat{\Phi}(x),\mu(x))^\top
        \end{bmatrix}^\top
    $ and its transpose, respectively, where $x\in\cX_\mathrm{RoA}$, $\mu(x)$ as in~\eqref{eq:controller-nonlinear-explicit}, $\Delta\in\mathbf{\Delta}$, and $\varepsilon(\hat{\Phi}(x),\mu(x))$ satisfies~\eqref{eq:finite-gain-bound-epsilon}, we obtain
    \begin{align}
        0 >& \begin{bmatrix}\star\end{bmatrix}^\top 
        \begin{bmatrix}
            0 & \tP \\ \tP & 0
        \end{bmatrix}
        \begin{bmatrix}
            \hat{\Phi}(x) \\ 
            A_K \hat{\Phi}(x) + B_{K_w} \Delta \mu(x) + \varepsilon(\hat{\Phi}(x),\mu(x))
        \end{bmatrix}
    \nonumber\\
        &+ \begin{bmatrix}\star\end{bmatrix}^\top 
        \Pi_\Delta
        \begin{bmatrix}
            \Delta \mu(x) \\
            K\hat{\Phi}(x) + K_w \Delta \mu(x)
        \end{bmatrix}
    \nonumber\\
        &+ \tau^{-1}\begin{bmatrix}\star\end{bmatrix}^\top 
        \Pi_r
        \begin{bmatrix}
            \varepsilon(\hat{\Phi}(x),\mu(x)) \\
            \begin{bmatrix}
                \hat{\Phi}(x) \\
                K\hat{\Phi}(x) + K_w \Delta \mu(x)
            \end{bmatrix}
        \end{bmatrix}.
    \label{eq:proof-S-procedure}
    \end{align}
    The last two summands of~\eqref{eq:proof-S-procedure} are nonnegative for $\Delta\in\mathbf{\Delta}$ and $\varepsilon(\hat{\Phi}(x),\mu(x))$ satisfying~\eqref{eq:finite-gain-bound-epsilon} as
    \begin{equation}\label{eq:proof-inequality-bilinearity-u}
        \mu(x)^\top
        \begin{bmatrix}
            \Delta \\
            I_m
        \end{bmatrix}^\top 
        \Pi_\Delta
        \begin{bmatrix}
            \Delta \\
            I_m
        \end{bmatrix}
        \mu(x)
        \geq 0
    \end{equation}
    with the definition of $\mathbf{\Delta}$ in~\eqref{eq:Delta-representation-Delta-structured} and 
    \begin{multline*}
        \begin{bmatrix}\star\end{bmatrix}^\top 
        \Pi_r
        \begin{bmatrix}
            \varepsilon(\hat{\Phi}(x),\mu(x)) \\
            \begin{bmatrix}
                \hat{\Phi}(x) \\ 
                K\hat{\Phi}(x) + K_w \Delta \mu(x)
            \end{bmatrix}
        \end{bmatrix}
        \\
        \geq c_r^2 (\|\hat{\Phi}(x)\| + \|\mu(x)\|)^2 -\|\varepsilon(\hat{\Phi}(x),\mu(x))\|^2.
    \end{multline*}
    Hence,~\eqref{eq:proof-S-procedure} implies
    \begin{multline}\label{eq:proof-nonlinear-dissipation-inequality-substitute}
        0 > \hat{\Phi}(x)^\top \tP (A_K \hat{\Phi}(x) + B_{K_w}\Delta \mu(x) + \varepsilon(\hat{\Phi}(x),\mu(x))) 
        \\
        + (A_K \hat{\Phi}(x) + B_{K_w}\Delta \mu(x) + \varepsilon(\hat{\Phi}(x),\mu(x)))^\top \tP \hat{\Phi}(x)
    \end{multline}
    if $x\in\cX_\mathrm{RoA}\setminus\{0\}$, $\Delta\in\mathbf{\Delta}$, and $\varepsilon(\hat{\Phi}(x),\mu(x))$ satisfies~\eqref{eq:finite-gain-bound-epsilon} (cf.~ the S-procedure~\cite{scherer:weiland:2000,boyd:vandenberghe:2004}).
    Recall the Lyapunov candidate function $V(x)$ with $\ddt{}\hat{\Phi}(x(t)) = A_K \hat{\Phi}(x(t)) + B_{K_w} \Delta(t)\mu(x(t)) + \varepsilon(\hat{\Phi}(x(t)),\mu(x(t)))$, $(I_m\kron\hat{\Phi}(x(t)))\in\mathbf{\Delta}$, and $\varepsilon(\hat{\Phi}(x(t)),\mu(x(t)))$ satisfying~\eqref{eq:finite-gain-bound-epsilon} if $\hat{r}(x(t),\mu(x(t)))$ satisfies~\eqref{eq:finite-gain-bound-remainder} for all $x(t)\in\cX_\mathrm{RoA}$.
    Then, positive invariance of $\cX_\mathrm{RoA}$ and, thus, exponential stability of the nonlinear system~\eqref{eq:dynamics-nonlinear} for all $x(t)\in\cX_\mathrm{RoA}$ with probability $1-\delta$ can be shown analogously to the proof of Theorem~\ref{thm:stability-condition-LFR-single-input} invoking Proposition~\ref{prop:sampling-error-bound-Koopman-generator} and Lemma~\ref{lm:dynamics-lifted}.
\end{proof}

The closed loop corresponding to the controller $\mu(x)$ obtained by Theorem~\ref{thm:stability-condition-LFR} is robust w.r.t.\ the uncertain but bounded remainder $\varepsilon(\hat{\Phi}(x),u)=\hat{r}(x,u)$ and the artificially introduced bounded uncertainty $\Delta=I_m\kron\hat{\Phi}(x)$. 
While the feedback design in Section~\ref{sec:controller-design-simple} leads to a robust controller which is linear in the lifted state $\hat{\Phi}(x)$, the controller~\eqref{eq:controller-nonlinear-explicit} depends nonlinearly on $\hat{\Phi}(x)$. 
This more general controller structure results from using that $\hat{\Phi}(x)$ and, thus, the uncertainty $\Delta$ in~\eqref{eq:dynamics-lifted} is not unknown but can be measured during operation to compute the feedback law.
We note that Theorem~\ref{thm:stability-condition-LFR} reduces to Theorem~\ref{thm:stability-condition-LFR-single-input} for $m=1$ and $L_w=0$, i.e., a feedback that is linear in the lifted state.
Due to its additional flexibility and more general controller structure, the design proposed in Theorem~\ref{thm:stability-condition-LFR}
allows for the consideration of larger error bounds $c_r$ on the unknown remainder or is feasible even for potentially fewer data samples.
Alternatively, the same amount of data can lead to a significantly increased region $\cX_\mathrm{RoA}$.
Despite the more flexible controller structure, the controller design relies on the solution of an SDP in terms of LMI feasibility conditions and the resulting computational complexity remains comparable to the one of the feedback design in Section~\ref{sec:controller-design-simple}. 
Hence, Theorem~\ref{thm:stability-condition-LFR} establishes an efficiently solvable and flexible controller design with probabilistic guarantees for the \emph{true} nonlinear system~\eqref{eq:dynamics-nonlinear} based on sampled data, where the necessary amount of data $d_0$ is specified by~\eqref{eq:mindata} for a given desired probabilistic tolerance $\delta\in (0,1)$ and a desired bound $c_r>0$ on the remainder.

The overall controller design based on EDMD~\eqref{eq:EDMD-optimization} and Theorem~\ref{thm:stability-condition-LFR} is summarized in Algorithm~\ref{alg:controller-design}.
\begin{algorithm}[tb]
    \caption{Flexible state-feedback controller design corresponding to Theorem~\ref{thm:stability-condition-LFR}.}\label{alg:controller-design}
    \textbf{Input:}
    \begin{itemize}
        \item Data $\{x_j^{\bar{u}}, \dot{x}_j^{\bar{u}}\}_{j=1}^{d^{\bar{u}}}$ for ${\bar{u}}\in\{0,e_1,...,e_m\}$, where $d^{\bar{u}}\geq d_0$ for $d_0$ according to~\eqref{eq:mindata}
        \item Lifting $\Phi(x) = \begin{bmatrix}1 & \hat{\Phi}(x)\end{bmatrix}^\top$ defined in~\eqref{eq:lifting-function}
        \item Probabilistic tolerance $\delta\in (0,1)$ and error bound $c_r > 0$
        \item Ellipsoidal region $\mathbf{\Delta}_\Phi$ defined by $Q_z\prec 0$, $S_z$, $R_z > 0$
    \end{itemize}
    \textbf{Output:} Exponentially stabilizing controller $u=\mu(x)$ for system~\eqref{eq:dynamics-nonlinear} with probability $1-\delta$ if $x\in\cX_\mathrm{RoA}$

\emph{Data-driven system representation:}
\begin{algorithmic}
    \State Arrange the data according to~\eqref{eq:data-matrices} in $X^{\bar{u}}, Y^{\bar{u}}$
    \State Solve the optimization problem~\eqref{eq:EDMD-optimization} to obtain the data-based system matrices $A$, $B_0$, $B_1$, ..., $B_m$ according to Lemma~\ref{lm:dynamics-lifted}
\end{algorithmic}

\emph{Controller design:}
\begin{algorithmic}
    \State Solve the LMI feasibility problem in Theorem~\ref{thm:stability-condition-LFR}
    \If{successful}
        \State Define the controller $\mu(x)$ as in~\eqref{eq:controller-nonlinear-explicit}
    \EndIf
    \end{algorithmic}
\end{algorithm}
\section{On the geometry of the region of attraction}\label{sec:geometry-of-RoA}
In this section, we discuss specific properties of the guaranteed RoA resulting from its quadratic parametrization in lifted coordinates.
Recall that our main results Theorem~\ref{thm:stability-condition-LFR-single-input} and Theorem~\ref{thm:stability-condition-LFR} guarantee exponential stability with a RoA given by a sublevel set of a quadratic Lyapunov function in the lifted space, i.e., $\cX_\mathrm{RoA} = \{x\in\bbR^n\,|\, \hat{\Phi}(x)^\top P^{-1} \hat{\Phi}(x) \leq 1\}$.
In the following, we discuss that a suitable choice of $P$ leading to a large RoA depends, as to be expected, on the distortion caused by the nonlinear lifting. 
In particular, we show that multiples of the identity matrix may not yield satisfactory results.
These insights have direct implications for determining a suitable choice of the matrices $Q_z$, $S_z$, $R_z$ in~\eqref{eq:condition-x-in-Z}, which parametrize the uncertainty description $\mathbf{\Delta}_\Phi$ characterizing the bilinear component of the dynamics.

As an instructive example, consider the lifting $\hat{\Phi}(x)=(x,x^2)$ with $x\in[-5,5]$.
Suppose we want to represent the set inclusion $x\in[-5,5]$ via the set $\mathbf{\Delta}_\Phi$, i.e., as a quadratic inequality in the lifted space.
This can be done by defining $\mathbf{\Delta}_\Phi$ such that $
    \mathbf{\Delta}_x 
    =\{
        (x,x^2), x\in[-5,5]
    \}
    \subseteq\mathbf{\Delta}_\Phi
$.
The simple choice $Q_{z,1}=-I$, $S_{z,1}=0$, $R_{z,1}=650$ leads to 
\begin{equation}
    \mathbf{\Delta}_{\Phi,1}
    =\left\{
        \Delta_\Phi\in\bbR^2
        \;\middle|\;
        \Delta_{\Phi,1}^2 + \Delta_{\Phi,2}^2 \leq 650
    \right\},
\end{equation}    
corresponding to the parametrization of $\mathbf{\Delta}_x$ as
\begin{equation}\label{eq:Delta-x-parametrization-simple}
    \mathbf{\Delta}_x
    =\left\{
        (x,x^2),\,x\in\bbR
        \;\middle|\;
        x^2 + x^4\leq 650
    \right\}.
\end{equation}
This is, however, a non-ideal representation as it does not account for the fact that $x^4$ significantly outweighs $x^2$ when $|x|>1$, and conversely for $|x|<1$.
Alternatively, we can, for example, rewrite 
\begin{equation}\label{eq:Delta-x-parametrization-balanced}
    \mathbf{\Delta}_x
    =\left\{
        (x,x^2),\,x\in\bbR
        \;\middle|\;
        \tfrac{1}{2}\left(5^{-2} x^2 + 5^{-4} x^4\right) \leq 1
    \right\}
\end{equation}
which leads with $Q_{z,2}=-\frac{1}{2}\diag(5^{-2},5^{-4})$, $S_{z,2}=0$, $R_{z,2}=1$ to a more balanced parametrization 
\begin{equation}
    \mathbf{\Delta}_{\Phi,2}
    =\left\{
        \Delta_\Phi\in\bbR^2
        \;\middle|\;
        \tfrac{1}{2}\left(5^{-2} \Delta_{\Phi,1}^2 + 5^{-4}\Delta_{\Phi,2}^2\right) \leq 1
    \right\}.
\end{equation}
Although both $\mathbf{\Delta}_{\Phi,1}$ and $\mathbf{\Delta}_{\Phi,2}$ contain $\mathbf{\Delta}_x$, the volume of $\mathbf{\Delta}_{\Phi,2}$ is significantly smaller than the volume of $\mathbf{\Delta}_{\Phi,1}$, cf.\ Fig.~\ref{fig:geometry-illustration}.
On the other hand, the nonlinear dependence of $\hat{\Phi}(x)=(x,x^2)\in\mathbf{\Delta}_x$, i.e., the second entry is always the first entry squared, causes the same set $\mathbf{\Delta}_x$ for both parametrizations~\eqref{eq:Delta-x-parametrization-simple} and~\eqref{eq:Delta-x-parametrization-balanced}.
This nonlinear dependence is, however, neglected during controller design, where we over-approximate the bilinearity using the ellipsoidal descriptions $\mathbf{\Delta}_{\Phi,1}\supseteq\mathbf{\Delta}_{x}$ and $\mathbf{\Delta}_{\Phi,2}\supseteq\mathbf{\Delta}_{x}$ as shown in Fig.~\ref{fig:geometry-illustration}.
\begin{figure}[t]
    \centering
    \begin{tikzpicture}[%
        /pgfplots/every axis y label/.style={at={(0,0.5)},xshift=-30pt,rotate=0},%
    ]%
        \begin{axis}[
            axis equal,
            xmin=-sqrt(650),xmax=sqrt(650),
            ymin=-sqrt(1250),ymax=sqrt(1250),
            legend pos= south east,
            xlabel=$x$,
            xtick distance=10,
            minor x tick num=9,
            ylabel=$x^2$,
            ytick distance=10,
            minor y tick num=9,
            grid=both,
            width = 0.95\columnwidth,
            minor grid style={gray!20},
            unbounded coords = jump
        ]
            \draw[thick, color=black, fill=black!70] (0,0) circle [x radius=sqrt(650), y radius=sqrt(650), rotate=0];
            \draw[thick, color=black, fill=black!10] (0,0) circle [x radius=sqrt(50), y radius=sqrt(1250), rotate=0];
            \draw[thick, color=black, dashed] (0,0) circle [x radius=sqrt(650), y radius=sqrt(650), rotate=0];
            \addplot[black, ultra thick,domain=-5:5] (x,x*x);
        \end{axis}
    \end{tikzpicture}
    \vspace*{-0.25\baselineskip}
    \setbox1=\hbox{\begin{tikzpicture}
        \draw[black,fill=black!70,thick](0,0) circle (0.08);
    \end{tikzpicture}}
    \setbox2=\hbox{\begin{tikzpicture}
        \draw[black,fill=black!10,thick](0,0) circle (0.08);
    \end{tikzpicture}}
    \setbox3=\hbox{\begin{tikzpicture}[baseline]
        \draw[black,ultra thick] (0,.6ex)--++(1.25em,0);
    \end{tikzpicture}}
    \caption{Illustration of the regions $\mathbf{\Delta}_x$ (\usebox3), $\mathbf{\Delta}_{\Phi,1}$ (\usebox1), and $\mathbf{\Delta}_{\Phi,2}$ (\usebox2).}
    \label{fig:geometry-illustration}
\vspace*{-0.5\baselineskip}
\end{figure}

For the proposed controller design, the above observation is especially crucial because $Q_{z}$, $S_{z}$, and $R_{z}$ are fixed a-priori and they define an over-approximation of the RoA.
In particular, the a-priori choice of $Q_{z}$ determines the shape of the set $\mathbf{\Delta}_\Phi$.
Suitable choices of $Q_z$ may allow for a larger RoA and should, thus, take the geometry of the RoA into account.

In the following, we propose a heuristic for determining a matrix $Q_z$ based on the chosen lifting and the resulting EDMD-based estimated system matrices, i.e., $A$, $B_0$, $B_1$ or $A$, $B_0$, $\tB$ corresponding to the steps for the data-driven system representation in Algorithm~\ref{alg:controller-design-single-input} or Algorithm~\ref{alg:controller-design}, respectively.
\begin{procedure}\label{proc:heuristic-Delta-Phi}
    The following steps lead to a non-trivial choice of $\mathbf{\Delta}_\Phi$ which takes the structure of the lifting into account:
    \begin{enumerate}
        \item Define $Q_z=-I$, $S_z=0$, and $R_z>0$.
        \item Find 
            \begin{enumerate}
                \item for Algorithm~\ref{alg:controller-design-single-input}: $\hP=\hP^\top\succ 0$, $\hL$, $\hat{\lambda}>0$, $\hat{\tau}>0$ such that~\eqref{eq:stability-condition-LFR-single-input}, or
                \item for Algorithm~\ref{alg:controller-design}: $\hP=\hP^\top\succ 0$, $\hL$, $\hL_w$, $\hat{\Lambda}=\hat{\Lambda}^\top\succ 0$, $\hat{\tau}>0$ such that~\eqref{eq:stability-condition-LFR}.
            \end{enumerate}
        \item Define $\mathbf{\Delta}_\Phi$ via $Q_z = -\frac{\hP^{-1}}{\|\hP^{-1}\|_2}$, $S_z=0$, and $R_z>0$.
    \end{enumerate}
\end{procedure}
After defining $\mathbf{\Delta}_\Phi$, we can proceed as stated in the controller design of Algorithm~\ref{alg:controller-design-single-input} or Algorithm~\ref{alg:controller-design}.
The steps in Procedure~\ref{proc:heuristic-Delta-Phi} solve the proposed controller design once without enforcing the RoA to be robust positively invariant. 
This results in the SDP-optimization having more degrees of freedom to find a suitable shape of the ellipsoidal RoA defined by $\hP$. 
As a consequence, the optimization can find a shape that naturally takes the dynamics \emph{and} the structure of the lifting function into account.
Further, it allows to define the uncertainty region $\mathbf{\Delta}_\Phi$ such that the shape of the obtained RoA $\cX_\mathrm{RoA}$ of the controller design is more similar to the shape of $\mathbf{\Delta}_\Phi$ itself and, thus, reduces conservatism in the robust controller design.
A rigorous investigation of the resulting RoA for a specific choice of $\mathbf{\Delta}_\Phi$ is a promising future path.
\section{Numerical examples}\label{sec:numerical-examples}
Next, we illustrate the proposed controller design in numerical examples. We conduct the simulations in Matlab using the toolbox YALMIP~\cite{lofberg:2004} with the SDP solver MOSEK~\cite{mosek:2022}.

\subsection{Nonlinear system with invariant Koopman lifting}\label{exmp:cooked-up-eigenfunctions}
    Consider the nonlinear system~(cf.~\cite{brunton:brunton:proctor:kutz:2016})
    \begin{equation}\label{eq:exmp-cooked-up}
        \dot{x}_1(t) = \rho x_1(t), 
        \quad
        \dot{x}_2(t) = \lambda(x_2(t) - x_1(t)^2) + u(t).
    \end{equation}
    with $\rho,\lambda \in \bbR$.
    To obtain a Koopman-based surrogate model, we define the lifting function~$\hat{\Phi}$ as 
    \begin{equation}
        \hat{\Phi}(x) = \begin{bmatrix}
            x_1 & x_2 & x_2 - \frac{\lambda}{\lambda-2\rho}x_1^2
        \end{bmatrix}^\top.
    \end{equation}
    A particular feature of~\eqref{eq:exmp-cooked-up} is that we can derive an exact finite-dimensional lifted bilinear representation given by
    \begin{align*}
        \ddt{}\hat{\Phi}(x(t)) 
        &= \left(\grad \hat{\Phi}(x(t))\right)^\top \left(\ddt{}x(t)\right)
        \\
        &= \begin{bmatrix}
            \rho & 0 & 0 \\
            0 & 2\rho & \lambda - 2 \rho \\
            0 & 0 & \lambda
        \end{bmatrix}
        \hat{\Phi}(x(t))
        + \begin{bmatrix}
            0 \\ 1 \\ 1
        \end{bmatrix}
        u(t).
    \end{align*}
    In the following, we choose $\rho=-2$ and $\lambda=1$.
    Since we assume that the system dynamics are unknown but we have only access to data samples, we use EDMD for a data-driven approximation. 
    For a data length of $d=\SI{5000}{}$, where the data samples are uniformly sampled from the sets $\bbX=[-1,1]^2$ and $\bbU=[-1,1]$, the EDMD optimization~\eqref{eq:EDMD-optimization} yields the data-based lifted system dynamics~\eqref{eq:dynamics-lifted-single-input} with 
    \begin{equation*}
        A = \left[\begin{smallmatrix}
           -2.0 & 0    & 0   \\
            0   &-4.0  & 5.0 \\
            0   & 0    & 1.0 
        \end{smallmatrix}\right]
        \!,\,
        B_0 = \left[\begin{smallmatrix}
            0   \\
            1.0 \\
            1.0
        \end{smallmatrix}\right]
        \!,\,
        B_1 = \left[\begin{smallmatrix}
             0   &  0  &   0 \\
             0   &  0  &   0 \\
             0   &  0  &   0
        \end{smallmatrix}\right]
        .
    \end{equation*}
    To account for a possible sampling error, we choose $c_r=\SI{0.1}{}$ and the probabilistic tolerance $\delta=\SI{0.05}{}$, cf. Proposition~\ref{prop:sampling-error-bound-Koopman-generator}. 
    Here, the choice of $d=\SI{5000}{}$ results in the approximators $A,B_0,B_1$ being accurate up to 13 digits.

    To apply the proposed controller design, we define the ellipsoidal region $\mathbf{\Delta}_\Phi$ via $Q_z = -I$, $S_z=0$, and $R_z=\SI{500}{}$. 
    Following Algorithm~\ref{alg:controller-design-single-input} for the simplified controller design, we obtain the state-feedback control law 
    \begin{equation*}
        \mu(x) = \begin{bmatrix}
             0.00 &  -3.77 &  -3.46
        \end{bmatrix}
        \hat{\Phi}(x).
    \end{equation*}
    Here, Algorithm~\ref{alg:controller-design} for the more flexible state-feedback controller yields $L_w=0$ and, hence, the same controller $\mu$.
    This can be explained by the linear structure of the lifted dynamics, allowing to achieve satisfactory control performance already with a linear state feedback.
    The above controller stabilizes the closed loop of~\eqref{eq:exmp-cooked-up} in the RoA depicted in Fig.~\ref{fig:exmp-cooked-up-eigenfunctions}.
    We note that the resulting RoA for this example is such that $\hat{\Phi}(x)\in\mathbf{\Delta}$ if and only if $x\in\cX_\mathrm{RoA}$, i.e., the region $\mathbf{\Delta}_\Phi$ characterizing the artificially introduced uncertainty only contains $x$ which are contained in the RoA as well. 
    Hence, the designed controller, which is robust w.r.t. to all $\Delta_\Phi\in\mathbf{\Delta}_\Phi$, is only robust w.r.t. $x\in\mathrm{RoA}$ and adds no conservatism by accounting also for some $x\not\in\cX_\mathrm{RoA}$.
    \begin{figure}[tb]
        \centering
        \begin{tikzpicture}[%
            /pgfplots/every axis x label/.style={at={(0.5,0)},yshift=-20pt},%
            /pgfplots/every axis y label/.style={at={(0,0.5)},xshift=-25pt,rotate=90},%
          ]%
            \begin{axis}[
                axis equal,
                legend pos= south east,
                xlabel=$x_1$,
                xmin=-20,
                xmax=20,
                xtick distance=5,
                minor x tick num=4,
                ylabel=$x_2$,
                ymin=-22,
                ymax=22,
                ytick distance=5,
                minor y tick num=4,
                grid=both,
                width = 0.95\columnwidth,
                minor grid style={gray!20},
                unbounded coords = jump
            ]
                \addplot[black,fill=black!40,thick,smooth] table [x index=4,y index=5] {data/exmp-cooked-up-eigenfunctions.dat};
            \end{axis}
        \end{tikzpicture}
        \vspace*{-0.25\baselineskip}
        \caption{RoA $\cX_\mathrm{RoA}$ corresponding to Section~\ref{exmp:cooked-up-eigenfunctions} and the controller $\mu$.}
        \label{fig:exmp-cooked-up-eigenfunctions}
    \end{figure}
    \begin{remark}
        For the considered system~\eqref{eq:exmp-cooked-up} and lifting $\hat{\Phi}$, Lemma~\ref{lm:dynamics-lifted} guarantees the upper bound~\eqref{eq:finite-gain-bound-remainder} on the remainder if $d\geq d_0$, where $d_0$ characterizes the sufficient number of data points defined in~\eqref{eq:mindata}.
        This bound can be evaluated if the system dynamics are known. 
        In particular, the bound yields $d_0 = \SI{6.9e17}{}$ for this example.
        We emphasize that this bound is clearly conservative and, as demonstrated above, already less data leads to an accurate data-driven approximation of the true system. 
        The conservatism is expected to be due to the validity of the bound for general system classes.
        For a discussion of faster rates, e.g., using Hoeffdings inequality and sharpness of the data requirements, we refer to~\cite{philipp:schaller:boshoff:peitz:nuske:worthmann:2024}.
    \end{remark}

\subsection{Nonlinear system without invariant Koopman lifting}\label{exmp:cooked-up-overapprox}
    The example in Section~\ref{exmp:cooked-up-eigenfunctions} is based on a lifting function that results in a perfect (bi-)linear representation of the dynamics. 
    Since this is rarely the case in practice, we include an additional lifting function in the following. 
    As before, we consider the nonlinear system~\eqref{eq:exmp-cooked-up} with $\rho=-2$ and $\lambda=1$, but additionally include the lifting function $\phi_4(x)=x_1x_2$ resulting in the lifting function 
    \begin{equation*}
        \hat{\Phi}(x) = \begin{bmatrix}
            x_1 & x_2 & x_2 - \frac{\lambda}{\lambda-2\rho}x_1^2 & x_1x_2
        \end{bmatrix}^\top.
    \end{equation*}
    Note that the corresponding dictionary is not invariant under the system dynamics~\eqref{eq:exmp-cooked-up} and thereby violates Assumption~\ref{ass:invariance-of-dictionary}.
    In the following, however, we show that the proposed approach still produces reliable results.
    For the data generation, we sample again $d=\SI{5000}{}$ data points uniformly from the interval $\bbX=[-1,1]^2$ and $\bbU=[-1,1]$. To illustrate the practical usability of our approach, we consider the case of imperfect state derivative measurements, i.e., we consider the data $\{x_j,\dot{\tilde{x}}_j\}_{j=1}^{d}$ with $\dot{\tilde{x}}_j = \dot{x}_j + \xi_j$, where $\xi_j$ is uniformly sampled from the interval $[-\bar{\xi},\bar{\xi}]^2$ with $\bar{\xi}=0.05$. Following Section~\ref{sec:DD-system-representation} yields a lifted bilinear representation~\eqref{eq:dynamics-lifted-single-input}, where we assume that the remainder is bounded by~\eqref{eq:finite-gain-bound-remainder} with $c_r=\SI{0.01}{}$ and probabilistic tolerance $\delta=\SI{0.05}{}$.
    Further, we implement Algorithm~\ref{alg:controller-design} for the ellipsoidal region $\mathbf{\Delta}_\Phi$ based on $Q_z = -I$, $S_z=0$, and $R_z=\SI{1000}{}$ leading to the controller 
    \begin{equation*}
        \mu_1(x) = \frac{
            \begin{bmatrix}
                 -0.0001  & -0.0132  & -2.8888   & 0.0001
            \end{bmatrix}
            \hat{\Phi}(x)
        }{
            1 - \begin{bmatrix}
                 0.0001 & -0.0000 & -0.0000 & -0.0000
            \end{bmatrix}
            \hat{\Phi}(x)
        }
    \end{equation*}
    stabilizing the closed-loop of~\eqref{eq:exmp-cooked-up} in the RoA depicted in Fig.~\ref{fig:exmp-cooked-up-overapprox}.
    \begin{figure}[tb]
        \centering
        \begin{tikzpicture}[%
            /pgfplots/every axis x label/.style={at={(0.5,0)},yshift=-20pt},%
            /pgfplots/every axis y label/.style={at={(0,0.5)},xshift=-25pt,rotate=90},%
          ]%
            \begin{axis}[
                axis equal,
                legend pos= south east,
                xlabel=$x_1$,
                xmin=-20,
                xmax=20,
                xtick distance=5,
                minor x tick num=4,
                ylabel=$x_2$,
                ymin=-22,
                ymax=22,
                ytick distance=5,
                minor y tick num=4,
                grid=both,
                width = 0.95\columnwidth,
                minor grid style={gray!20},
                unbounded coords = jump
            ]
                \addplot[black,fill=black!70,thick,smooth] table [x index=0,y index=1] {data/exmp-cooked-up-overapprox.dat};
                \addplot[black,fill=black!40,thick,smooth] table [x index=4,y index=5] {data/exmp-cooked-up-overapprox.dat};
            \end{axis}
        \end{tikzpicture}
        \vspace*{-0.25\baselineskip}
        \setbox1=\hbox{\begin{tikzpicture}
            \draw[black,fill=black!70,thick](0,0) circle (0.08);
        \end{tikzpicture}}
        \setbox2=\hbox{\begin{tikzpicture}
            \draw[black,fill=black!40,thick](0,0) circle (0.08);
        \end{tikzpicture}}
        \caption{Region containing all $x$ with $\hat{\Phi}(x)\in\mathbf{\Delta}_\Phi$ (\usebox1) corresponding to Section~\ref{exmp:cooked-up-overapprox} and the RoA $\cX_\mathrm{RoA}$ (\usebox2) for the controller $\mu_1$.}
        \label{fig:exmp-cooked-up-overapprox}
    \end{figure}
    Recall that the region $\mathbf{\Delta}_\Phi$ is chosen and fixed prior to the controller design. 
    The controller design then optimizes for a controller guaranteeing closed-loop stability for a large RoA.
    Ideally, this RoA contains all $x$ with $\hat{\Phi}(x)\in\mathbf{\Delta}_\Phi$ and, hence, a good a-priori choice of $\Delta_\Phi$ is critical (cf. Section~\ref{sec:geometry-of-RoA}).
    Whereas this was indeed achieved by the controller in Section~\ref{exmp:cooked-up-eigenfunctions}, the resulting RoA in Fig.~\ref{fig:exmp-cooked-up-overapprox} contains only a small portion of all $x\in \bbR^2$ satisfying $\hat{\Phi}\in\mathbf{\Delta}_\Phi$.
    This is a clear indication that the chosen $\mathbf{\Delta}_\Phi$ is not optimal w.r.t. the used lifting functions and the underlying nonlinear dynamics~\eqref{eq:exmp-cooked-up}.
    
    To investigate this further, we consider the same setting as before but with a different choice of $\mathbf{\Delta}_\Phi$.
    In particular, we define $\mathbf{\Delta}_\Phi$ again via $S_z=0$ and $R_z=\SI{1000}{}$, but with the more sophisticated choice of $Q_z = -\diag(2.5,2.5,1.25,0.005)$ with an individual weight for each lifting function. 
    Then, Algorithm~\ref{alg:controller-design} yields the controller
    \begin{equation*}
        \mu_2(x) = \frac{
            \begin{bmatrix}
                 -0.0049 & -2.0567 & -2.4020 & -0.0092
            \end{bmatrix}
            \hat{\Phi}(x)
        }{
            1 - \begin{bmatrix}
                 0.0053 & -0.0000 & -0.0000 & -0.0001
            \end{bmatrix}
            \hat{\Phi}(x)
        }
    \end{equation*}
    stabilizing the closed-loop of~\eqref{eq:exmp-cooked-up} in the RoA depicted in Fig.~\ref{fig:exmp-cooked-up-overapprox-tuned}.
    \begin{figure}[tb]
        \centering
        \begin{tikzpicture}[%
            /pgfplots/every axis x label/.style={at={(0.5,0)},yshift=-20pt},%
            /pgfplots/every axis y label/.style={at={(0,0.5)},xshift=-25pt,rotate=90},%
          ]%
            \begin{axis}[
                axis equal,
                legend pos= south east,
                xlabel=$x_1$,
                xmin=-20,
                xmax=20,
                xtick distance=5,
                minor x tick num=4,
                ylabel=$x_2$,
                ymin=-22,
                ymax=22,
                ytick distance=5,
                minor y tick num=4,
                grid=both,
                width = 0.95\columnwidth,
                minor grid style={gray!20},
                unbounded coords = jump
            ]
                \addplot[black,fill=black!70,thick,smooth] table [x index=0,y index=1] {data/exmp-cooked-up-overapprox-tuned.dat};
                \addplot[black,fill=black!40,thick,smooth] table [x index=4,y index=5] {data/exmp-cooked-up-overapprox-tuned.dat};
            \end{axis}
        \end{tikzpicture}
        \vspace*{-0.25\baselineskip}
        \setbox1=\hbox{\begin{tikzpicture}
            \draw[black,fill=black!70,thick](0,0) circle (0.08);
        \end{tikzpicture}}
        \setbox2=\hbox{\begin{tikzpicture}
            \draw[black,fill=black!40,thick](0,0) circle (0.08);
        \end{tikzpicture}}
        \caption{Region containing all $x$ with $\hat{\Phi}(x)\in\mathbf{\Delta}_\Phi$ (\usebox1) corresponding to Section~\ref{exmp:cooked-up-overapprox} and the RoA $\cX_\mathrm{RoA}$ (\usebox2) for the controller $\mu_2$.}
        \label{fig:exmp-cooked-up-overapprox-tuned}
    \end{figure}
    We stress that this RoA a) is significantly larger and b) provides a more accurate cover of the region of all $x$ with $\hat{\Phi}(x)\in\mathbf{\Delta}_\Phi$ than the one for $\mathbf{\Delta}_\Phi$ based on $Q_z=-I$.
    This illustrates the important role of the shape of $\mathbf{\Delta}_\Phi$ in the proposed controller design.

    We note that the obtained RoA is only guaranteed for perfect measurements, but the above simulation studies for noisy data showed comparable results as in the noise-free case. The theoretical analysis for our proposed framework with noisy data is left for future work.
\subsection{Inverted pendulum}\label{exmp:inverse-pendulum-sin}
    To illustrate our controller design with an actually physical example, we investigate an inverted pendulum which is a well-used benchmark for nonlinear controller design methods~(cf.~\cite{strasser:berberich:allgower:2021,tiwari:nehma:lusch:2023,martin:schon:allgower:2023a,verhoek:abbas:toth:2023,martin:schon:allgower:2023b} and references therein).
    To this end, we consider the dynamics 
    \begin{subequations}\label{eq:exmp-inverse-pendulum-dynamics}
        \begin{align}
            \dot{x}_1(t) &= x_2(t), \\
            \dot{x}_2(t) &= 
            \tfrac{g}{l}\sin(x_1(t))
            - \tfrac{b}{ml^2} x_2(t)
            + \tfrac{1}{ml^2} u(t)
        \end{align}
    \end{subequations}
    with mass $m$, length $l$, rotational friction coefficient $b$, and gravitational constant $g=\SI{9.81}{m\per s}$. 
    For the simulation, we choose $m=1$, $l=1$, and $b=0.01$.
    We define the lifting function
    $
        \hat{\Phi}(x) = \begin{bmatrix}
            x_1 & x_2 & \sin(x_1)
        \end{bmatrix}^\top
    $
    and use $d=\SI{15000}{}$ data points uniformly sampled from $\bbX=[-2,10]^2$ and $\bbU=[-10,10]$.
    This leads to the data-driven bilinear Koopman surrogate model~\eqref{eq:dynamics-lifted-single-input} with 
    \begin{gather*}
        A \!=\! \left[\begin{smallmatrix}
             0    &  1.00 &  0    \\
             0    & -0.50 &  9.81 \\
            -0.17 &  0.13 &  0.03
        \end{smallmatrix}\right] 
        \!,\,
        B_0 \!=\! \left[\begin{smallmatrix}
            0    \\
            1.00 \\
            1.12
        \end{smallmatrix}\right] 
        \!,\,
        B_1 \!=\! \left[\begin{smallmatrix}
             0    &  0    &  0    \\
             0    &  0    &  0    \\
            -0.10 & -0.10 &  0.13
        \end{smallmatrix}\right] 
        \!,
    \end{gather*}
    where the remainder $\hat{r}(x,u)$ is bounded by~\eqref{eq:finite-gain-bound-remainder} with $c_r=\SI{0.02}{}$ and probabilistic tolerance $\delta=\SI{0.05}{}$.
    In the following, we compare two different choices for the region $\mathbf{\Delta}_\Phi$ and their implications for the application of Algorithms~\ref{alg:controller-design-single-input} and~\ref{alg:controller-design}.

    1) First, we use the straightforward definition of $\mathbf{\Delta}_\Phi$ via $Q_z = -I$, $S_z=0$, and $R_z=\SI{30}{}$. 
    Then, Algorithm~\ref{alg:controller-design-single-input} and Algorithm~\ref{alg:controller-design} yield the simplified state feedback
    \begin{equation*}
        \mu_1(x) = \begin{bmatrix}
            -1.4095 & -8.1208 & -8.8105
        \end{bmatrix}
        \hat{\Phi}(x)
    \end{equation*}
    and the more flexible state feedback
    \begin{equation*}
        \mu_2(x) = \frac{
            \begin{bmatrix}
                 -0.5967 & -6.5454 & -4.6422
            \end{bmatrix}
            \hat{\Phi}(x)
        }{
            1 - \begin{bmatrix}
                 0.0453 & 0.0435 & -0.0579
            \end{bmatrix}
            \hat{\Phi}(x)
        }
    \end{equation*}
    with the respective RoA depicted in Fig.~\ref{fig:exmp-inverse-pendulum-sin}.
    \begin{figure}[t]
        \centering
        \begin{tikzpicture}[%
            /pgfplots/every axis x label/.style={at={(0.5,0)},yshift=-20pt},%
            /pgfplots/every axis y label/.style={at={(0,0.5)},xshift=-25pt,rotate=90},%
          ]%
            \begin{axis}[
                axis equal,
                legend pos= south east,
                xlabel=$x_1$,
                xmin=-10,
                xmax=10,
                xtick distance=2,
                minor x tick num=3,
                ylabel=$x_2$,
                ymin=-10,
                ymax=10,
                ytick distance=2,
                minor y tick num=3,
                grid=both,
                width = 0.95\columnwidth,
                minor grid style={gray!20},
                unbounded coords = jump
            ]
                \addplot[black,fill=black!70,thick,smooth] table [x index=0,y index=1] {data/exmp-inverse-pendulum-sin.dat};
                \addplot[black,fill=black!40,thick,smooth] table [x index=4,y index=5] {data/exmp-inverse-pendulum-sin.dat};
                \addplot[black,fill=black!10,thick,smooth] table [x index=2,y index=3] {data/exmp-inverse-pendulum-sin.dat};
            \end{axis}
        \end{tikzpicture}
        \vspace*{-0.25\baselineskip}
        \setbox1=\hbox{\begin{tikzpicture}
            \draw[black,fill=black!70,thick](0,0) circle (0.08);
        \end{tikzpicture}}
        \setbox2=\hbox{\begin{tikzpicture}
            \draw[black,fill=black!40,thick](0,0) circle (0.08);
        \end{tikzpicture}}
        \setbox3=\hbox{\begin{tikzpicture}
            \draw[black,fill=black!10,thick](0,0) circle (0.08);
        \end{tikzpicture}}
        \caption{Region containing all $x$ with $\hat{\Phi}(x)\in\mathbf{\Delta}_\Phi$ (\usebox1) corresponding to Section~\ref{exmp:inverse-pendulum-sin} and the RoA $\cX_\mathrm{RoA}$ for the controllers $\mu_1$ (\usebox3) and $\mu_2$ (\usebox2).}
        \label{fig:exmp-inverse-pendulum-sin}
    \end{figure}
    While the resulting RoAs acceptably cover the region corresponding to $\mathbf{\Delta}_\Phi$, the RoA for both $\mu_1$ and $\mu_2$ are ellipsoidal.
    This indicates that the ellipsoidal region $\mathbf{\Delta}_\Phi$ for $\hat{\Phi}(x)$ constrains mostly the state $x$ itself. 
    Hence, the presented robust controller design has to account for a possibly unnecessarily large uncertainty region.

    2) As a remedy, we define $\mathbf{\Delta}_\Phi$ now based on the proposed heuristic in Section~\ref{sec:geometry-of-RoA}.
    In particular, we first follow the steps of Procedure~\ref{proc:heuristic-Delta-Phi} to obtain a non-trivial choice of $Q_z$.
    This matrix is then used to define the shape of $\mathbf{\Delta}_\Phi$. 
    As before, we choose $S_z=0$ and use $R_z=5$ to scale the overall size of the region.
    Applying Algorithms~\ref{alg:controller-design-single-input} and~\ref{alg:controller-design} yields the controllers
    \begin{align}
        \mu_3(x) &= \begin{bmatrix}
            -3.2179 & -2.8083 & -13.6730
        \end{bmatrix}
        \hat{\Phi}(x),
        \\
        \mu_4(x) &= \frac{
            \begin{bmatrix}
                 -1.6091 & -1.3541 & -8.3180
            \end{bmatrix}
            \hat{\Phi}(x)
        }{
            1 - \begin{bmatrix}
                 0.0493 & 0.0469 & -0.0610
            \end{bmatrix}
            \hat{\Phi}(x)
        },
    \end{align}
    respectively, stabilizing the closed-loop of~\eqref{eq:exmp-inverse-pendulum-dynamics} in the significantly larger RoA depicted in Fig.~\ref{fig:exmp-inverse-pendulum-sin-tuned}.

    To validate the effectiveness of our method, we compute closed-loop trajectories for both controllers $\mu_3$ and $\mu_4$ and compare it to a linear-quadratic regulator $\mu_\mathrm{LQR}$ for a \emph{linear} Koopman surrogate model (cf.~\cite{brunton:brunton:proctor:kutz:2016}) using the same data and the same lifting function $\hat{\Phi}(x)$. 
    Whereas $\mu_\mathrm{LQR}$ is not able to stabilize the origin of the nonlinear system, both controllers $\mu_3$ and $\mu_4$ render the origin closed-loop stable.
    \begin{figure}[t]
        \centering
        \begin{tikzpicture}[%
            /pgfplots/every axis x label/.style={at={(0.5,0)},yshift=-20pt},%
            /pgfplots/every axis y label/.style={at={(0,0.5)},xshift=-25pt,rotate=90},%
          ]%
            \begin{axis}[
                axis equal,
                legend pos= south east,
                xlabel=$x_1$,
                xmin=-10,
                xmax=10,
                xtick distance=2,
                minor x tick num=3,
                ylabel=$x_2$,
                ymin=-10,
                ymax=10,
                ytick distance=2,
                minor y tick num=3,
                grid=both,
                width = 0.95\columnwidth,
                minor grid style={gray!20},
                unbounded coords = jump
            ]
                \addplot[black,fill=black!70,thick,smooth] table [x index=0,y index=1] {data/exmp-inverse-pendulum-sin-tuned.dat};
                \addplot[black,fill=black!40,thick,smooth] table [x index=4,y index=5] {data/exmp-inverse-pendulum-sin-tuned.dat};
                \addplot[black,fill=black!10,thick,smooth] table [x index=2,y index=3] {data/exmp-inverse-pendulum-sin-tuned.dat};
                \addplot[black,thick,smooth,dashed] table [x index=6,y index=7] {data/exmp-inverse-pendulum-sin-tuned.dat};
                \addplot[black,thick,smooth,dotted] table [x index=8,y index=9] {data/exmp-inverse-pendulum-sin-tuned.dat};
                \addplot[black,thick,smooth] table [x index=10,y index=11] {data/exmp-inverse-pendulum-sin-tuned.dat};
                %
                \addplot[black,thick,smooth,dashed] table [x index=18,y index=19] {data/exmp-inverse-pendulum-sin-tuned.dat};
                \addplot[black,thick,smooth,dotted] table [x index=20,y index=21] {data/exmp-inverse-pendulum-sin-tuned.dat};
                \addplot[black,thick,smooth] table [x index=22,y index=23] {data/exmp-inverse-pendulum-sin-tuned.dat};
                \addplot[black,thick,smooth,dashed] table [x index=24,y index=25] {data/exmp-inverse-pendulum-sin-tuned.dat};
                \addplot[black,thick,smooth,dotted] table [x index=26,y index=27] {data/exmp-inverse-pendulum-sin-tuned.dat};
                \addplot[black,thick,smooth] table [x index=28,y index=29] {data/exmp-inverse-pendulum-sin-tuned.dat};
                \addplot[black,thick,smooth,dashed] table [x index=30,y index=31] {data/exmp-inverse-pendulum-sin-tuned.dat};
                \addplot[black,thick,smooth,dotted] table [x index=32,y index=33] {data/exmp-inverse-pendulum-sin-tuned.dat};
                \addplot[black,thick,smooth] table [x index=34,y index=35] {data/exmp-inverse-pendulum-sin-tuned.dat};
                %
            \end{axis}
        \end{tikzpicture}
        \vspace*{-1.25\baselineskip}
        \setbox1=\hbox{\begin{tikzpicture}
            \draw[black,fill=black!70,thick](0,0) circle (0.08);
        \end{tikzpicture}}
        \setbox2=\hbox{\begin{tikzpicture}
            \draw[black,fill=black!40,thick](0,0) circle (0.08);
        \end{tikzpicture}}
        \setbox3=\hbox{\begin{tikzpicture}
            \draw[black,fill=black!10,thick](0,0) circle (0.08);
        \end{tikzpicture}}
        \setbox4=\hbox{\begin{tikzpicture}[baseline]
            \draw[black,thick,smooth,dashed] (0,.6ex)--++(0.9em,0);
        \end{tikzpicture}}
        \setbox5=\hbox{\begin{tikzpicture}[baseline]
            \draw[black,thick,smooth,dotted] (0,.6ex)--++(0.9em,0);
        \end{tikzpicture}}
        \setbox6=\hbox{\begin{tikzpicture}[baseline]
            \draw[black,thick,smooth] (0,.6ex)--++(0.9em,0);
        \end{tikzpicture}}
        \caption{Region containing all $x$ with $\hat{\Phi}(x)\in\mathbf{\Delta}_\Phi$ (\usebox1) corresponding to Section~\ref{exmp:inverse-pendulum-sin}, the RoA $\cX_\mathrm{RoA}$ for the controllers $\mu_3$ (\usebox3) and $\mu_4$ (\usebox2), as well as closed-loop trajectories for $\mu_3$ (\usebox4), $\mu_4$ (\usebox5), and $\mu_\mathrm{LQR}$ (\usebox6).}
        \label{fig:exmp-inverse-pendulum-sin-tuned}
        \vspace{-0.7\baselineskip}
    \end{figure}%
\section{Conclusion}\label{sec:conclusion}
In this paper, we presented a robust controller design to overcome challenges posed by finite-dimensional and data-driven approximations of Koopman surrogate models. 
In particular, we designed a flexible state-feedback controller based on the lifted system dynamics ensuring end-to-end stability guarantees for the true nonlinear system.
Moreover, our method established a crucial connection between closed-loop guarantees and the required data for a probabilistic bound on the approximation error.
Further, we used semidefinite programming and linear matrix inequalities to ensure an efficient and reliable design procedure.
The tutorial-style explanation of our results in Section~\ref{sec:controller-design-simple} enhanced accessibility and our extension to more flexible nonlinear state-feedback controllers in Section~\ref{sec:controller-design-general} reduced the conservatism of the approach. 
The broad applicability of the presented controller design framework was illustrated in numerical examples.

In future work, we aim to explicitly address the projection error due to an approximate and finite set of dictionary functions, considering a potential transition to a reproducing Kernel Hilbert space setting~\cite{philipp:schaller:worthmann:peitz:nuske:2024}. 
Additionally, we plan to explore alternative sampling strategies beyond the currently employed uniform sampling strategy, e.g., using trajectory data for specific inputs.
Possible future research includes investigating nonlinear systems with multiple equilibrium points~\cite{liu:ozay:sontag:2023}.

\bibliographystyle{IEEEtran}
\bibliography{literature}

\begin{IEEEbiography}[{\includegraphics[width=1in,height=1.25in,clip,keepaspectratio]{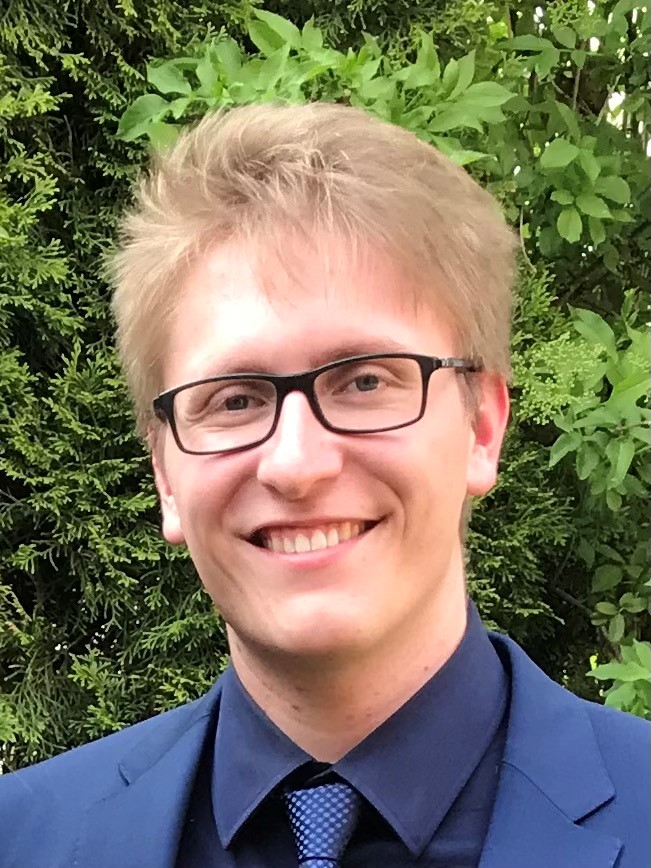}}]{Robin Strässer} received a master’s degree in Simulation Technology from the University of Stuttgart, Stuttgart, Germany, in 2020.
Since 2020, he has been a Research and Teaching Assistant with the Institute for Systems Theory and Automatic Control and a member of the Graduate School Simulation Technology, University of Stuttgart. 
His research interests include data-driven system analysis and control with focus on nonlinear systems.
Robin Strässer received the Best Poster Award at the International Conference on Data-Integrated Simulation Science (SimTech2023).
\end{IEEEbiography}

\begin{IEEEbiography}[{\includegraphics[width=1in,height=1.25in,clip,keepaspectratio]{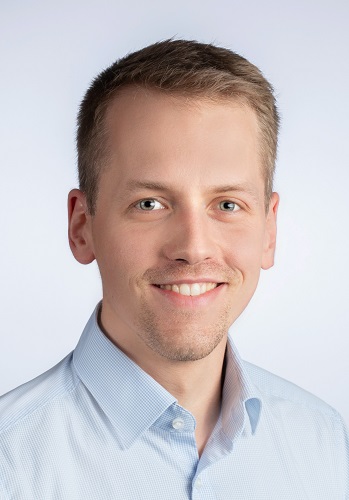}}]{Manuel Schaller} obtained the M.Sc.\ and Ph.D.\ in Applied Mathematics from the University of Bayreuth in 2017 and 2021 respectively. From 2020-2023 he held a PostDoc and Lecturer (tenure track) position in the Optimization-based Control group at Technische Universität Ilmenau, Germany. There, he is Assistant Professor for differential equations since July 2023. His research focuses on data-driven control with guarantees, port-Hamiltonian systems and stability in infinite dimensional optimal control.
For his research Dr.\ Schaller has been named junior fellow of the GAMM (Society for Applied Mathematics and Mechanics) and received the Best Poster Award at the workshop on systems theory and PDEs (WOSTAP 2022). 
\end{IEEEbiography}

\begin{IEEEbiography}[{\includegraphics[width=1in,height=1.25in,clip,keepaspectratio]{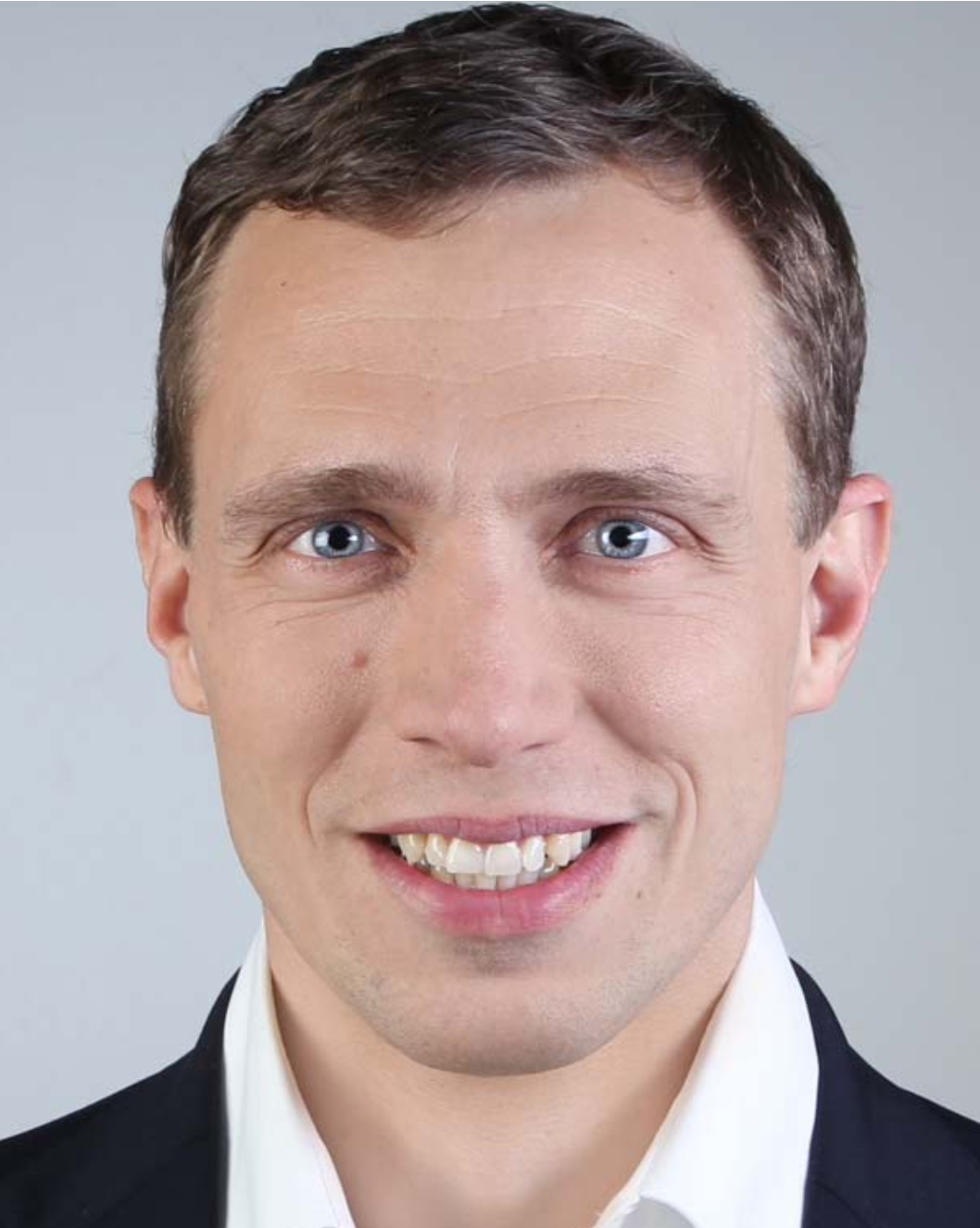}}]{Karl Worthmann} received his Ph.D.\ degree in mathematics from the University of Bayreuth, Germany, in 2012. 2014 he become assistant professor for ''Differential Equations'' at Technische Universität Ilmenau (TU Ilmenau), Germany. 2019 he was promoted to full professor after receiving the Heisenberg-professorship ''Optimization-based Control'' by the German Research Foundation (DFG). He was recipient of the Ph.D.\ Award from the City of Bayreuth, Germany, and stipend of the German National Academic Foundation. 2013 he has been appointed Junior Fellow of the Society of Applied Mathematics and Mechanics (GAMM), where he served as speaker in 2014 and 2015. 
His current research interests include systems and control theory with a particular focus on nonlinear model predictive control, stability analysis, and data-driven control.
\end{IEEEbiography}

\begin{IEEEbiography}[{\includegraphics[width=1in,height=1.25in,clip,keepaspectratio]{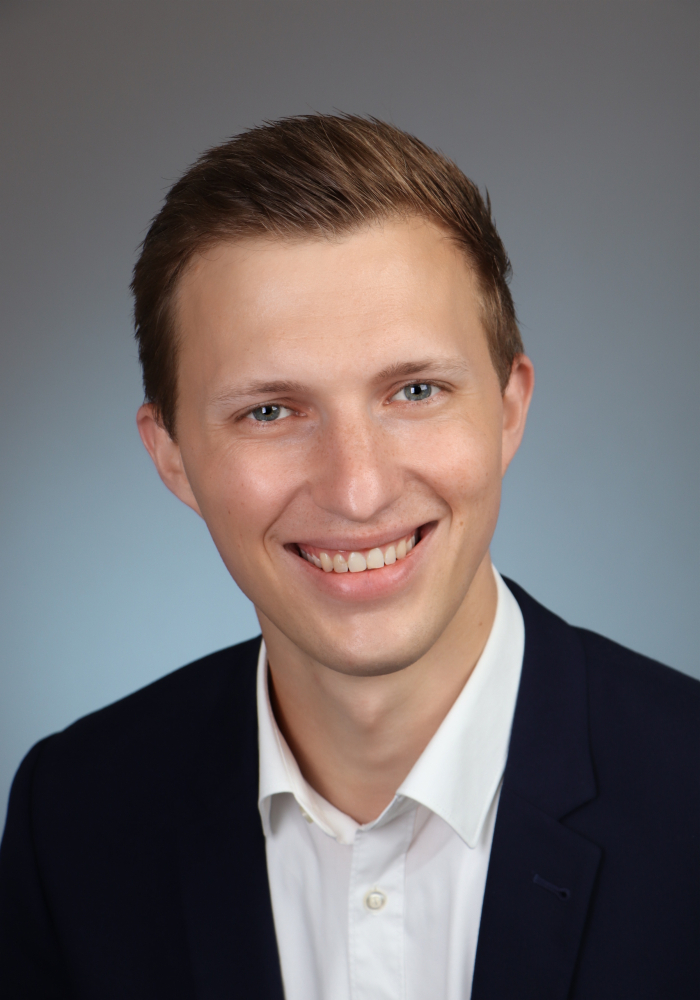}}]{Julian Berberich} received an M.Sc. degree in Engineering Cybernetics from the University of Stuttgart, Germany, in 2018.
In 2022, he obtained a Ph.D. in Mechanical Engineering, also from the University of Stuttgart, Germany. 
He is currently working as a Lecturer (Akademischer Rat) at the Institute for Systems Theory and Automatic Control at the University of Stuttgart, Germany. 
In 2022, he was a visiting researcher at the ETH Zürich, Switzerland. 
He has received the Outstanding Student Paper Award at the 59th IEEE Conference on Decision and Control in 2020 and the 2022 George S. Axelby Outstanding Paper Award. 
His research interests include data-driven analysis and control as well as quantum computing.
\end{IEEEbiography}

\begin{IEEEbiography}[{\includegraphics[width=1in,height=1.25in,clip,keepaspectratio]{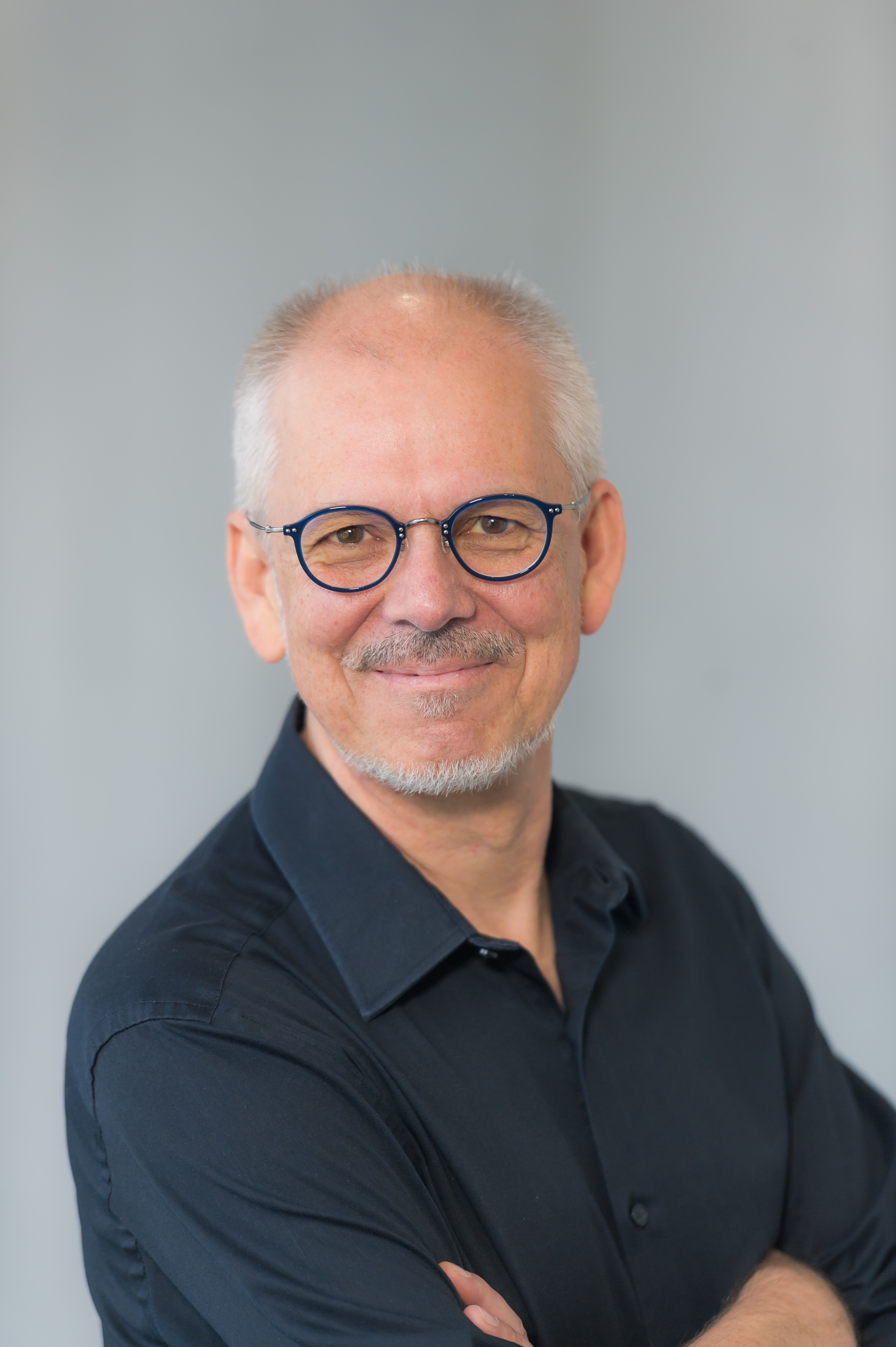}}]{Frank Allgöwer} studied engineering cybernetics and applied mathematics in Stuttgart and with the University of California, Los Angeles (UCLA), CA, USA, respectively, and received the Ph.D. degree from the University of Stuttgart, Stuttgart, Germany.
Since 1999, he has been the Director of the Institute for Systems Theory and Automatic Control and a professor with the University of Stuttgart. 
His research interests include predictive control, data-based control, networked control, cooperative control, and nonlinear control with application to a wide range of fields including systems biology.
Dr. Allgöwer was the President of the International Federation of Automatic Control (IFAC) in 2017–2020 and the Vice President of the German Research Foundation DFG in 2012–2020.
\end{IEEEbiography}

\end{document}